\newtheorem{theorem}{Theorem}[section]
\newtheorem{lemma}[theorem]{Lemma}
\theoremstyle{definition}
\newtheorem{definition}[theorem]{Definition}
\newtheorem{claim}[theorem]{Claim}
\newcommand{\cev}[1]{\reflectbox{\ensuremath{\vec{\reflectbox{\ensuremath{#1}}}}}}
\begin{document}
\title{Network coding in undirected graphs is either very helpful or not helpful at all}
\author[1]{Mark Braverman \thanks{mbraverm@princeton.edu. Supported in part by an NSF CAREER award (CCF-1149888), NSF CCF-1525342, a Packard Fellowship in Science and Engineering, and the Simons Collaboration on Algorithms and Geometry.}}
\author[1]{Sumegha Garg \thanks{sumeghag@cs.princeton.edu}}
\author[1]{Ariel Schvartzman \thanks{acohenca@cs.princeton.edu}}
\affil[1]{Department of Computer Science, Princeton University}
\maketitle

\begin{abstract}

While it is known that using network coding can significantly improve the throughput of directed networks, it is a notorious open problem whether 
coding yields {\em any} advantage over the multicommodity flow (MCF) rate in undirected networks.  It was conjectured in~\cite{LiLiConjecture} that the answer is \lq no\rq. In this paper we show that even a small advantage over MCF can be amplified to yield a near-maximum possible gap. 

We prove that any undirected network with $k$ source-sink pairs that exhibits a $(1+\varepsilon)$ gap between its MCF rate and its network coding rate can be used to construct a family of graphs $G'$ whose gap is $\log(|G'|)^c$ for some constant $c < 1$. The resulting gap is close to the best currently known upper bound, $\log(|G'|)$, which follows from the connection between MCF and sparsest cuts. 

Our construction relies on a gap-amplifying graph tensor product that, given two graphs $G_1,G_2$ with small gaps, creates another graph $G$ with a gap that is equal to the product of the previous two, at the cost of increasing the size of the graph. We iterate this process to obtain a gap of $\log(|G'|)^c$ from any initial gap. 

\end{abstract}
\newpage   
\section{Introduction}
\label{sec:intro}

The area of network coding addresses the following basic problem: in a distributed communication scenario, 
can one use coding to outperform packet routing-based solutions? While the problem of communicating information 
over a network can be viewed as the process of moving information packets between terminals, a key distinction 
between moving packets and moving commodities is that information packets can be re-encoded by intermediate nodes.
For example, a node which receives packets $P_1$ and $P_2$ can calculate and transmit the bitwise XOR packet $P_1\oplus P_2$ 
to its neighbor. This operation has no analogue in multicommodity flow scenarios.

Whether (and to what extent) this ability confers any benefits over the simple routing-based solution, depends on the specific goal 
of the communication at hand. Such goals may include uni-cast and multi-cast throughput, error-resilience and security, to name a few. 
These questions have been the subject of active study in the recent past. A summary of major directions can be found in the books \cite{yeung2008, medard2012}, 
and surveys \cite{Fragouli2007, Yeung2005}. 

In this paper we focus on noiseless unicast communication. The network is a capacitated graph $G$ with $k$ source-sink terminal 
pairs $(s_i,t_i)$. Each source  vertex $s_i$ wants to transmit an information stream to $t_i$. The network coding rate NC$(G)$ is 
the maximum rate at which transmission between all pairs can happen simultaneously, given the capacity constraints. 

 If we forbid coding, and restrict 
nodes to forwarding information packets that they receive, the problem becomes equivalent to multicommodity flow over $G$ --- the very well-studied 
problem of maximizing the rate MCF$(G)$ at which commodities are moved from sources to sinks subject to the capacity constraints  (see e.g. \cite{AhujaRMO} for background). Clearly, the multicommodity rate can always be achieved --- but can it be beaten using ``bit tricks''?

If the graph $G$ is directed, there are well-known examples which show that coding can improve throughput in a very dramatic way
 \cite{KleinbergHR04,Kleinberg:Networks}. There is a family of examples $G$, where the gap between the multicommodity flow rate and 
network coding throughput is as large as $O(|G|)$. Despite substantial effort, it is not clear whether coding confers {\em any} 
benefit over routing in undirected networks. Li and Li \cite{LiLiConjecture} conjectured that the answer is `no'. This conjecture is currently open. 

It is known that the Li and Li conjecture holds in some special cases. Naturally, it holds whenever the sparsity of the graph matches the multicommodity flow rate. For cases where these quantities are not equal, \cite{Jain2006} and \cite{Kramer2006} show that the conjecture is true for the Okamura-Seymour graph and \cite{Jain2006, Kleinberg:Networks} show it for an infinite family of bipartite graphs. Empirical evidence also suggests that the conjecture is true \cite{LiLiJLau05}. 

One simple case where coding rate cannot exceed capacity is the case when the channel is a single edge: two parties cannot be sending messages to each other at a total rate exceeding the channel's capacity. This is a simple consequence of Shannon's Noiseless Coding Theorem. As a simple corollary, the sparsest cut in $G$ provides an upper bound for the network coding rate NC$(G)$. The sparsity of a cut $(U,V\setminus U)$ is defined 
as 
\begin{equation}\label{eq:1}
\text{Sparsity}(U,V\setminus U):=\frac{\text{Capacity}(U,V\setminus U)}{\text{Demand}(U,V\setminus U)}
\end{equation}
If we merge the vertices on either side of the cut, the network coding rate becomes $
\text{Sparsity}(U,V\setminus U)$. Merging nodes can only increase network coding rate, and thus we have
NC$(G)\le\text{Sparsity}(U,V\setminus U)$. Since the sparsity of $G$ is defined as the minimum of \eqref{eq:1} 
over all cuts, we have NC$(G)\le \text{Sparsity}(G)$. 

As discussed below, the multicommodity flow problem is very well-studied. In the one commodity case, the Max-Flow Min-Cut Theorem asserts 
that sparsity is equal to the flow rate. In the multicommodity case, the sparsity is still an upper bound on the multicommodity flow rate 
MCF$(G)$, but it might be loose by a factor of $\log |G|$:
\begin{equation}\label{eq:2}
 \text{Sparsity}(G)/O(\log |G|) \le \text{MCF}(G) \le  \text{Sparsity}(G).
\end{equation}
Thus the advantage one can gain for network coding over undirected graphs is at most $O(\log|G|)$:
\begin{equation}\label{eq:3}
 \text{NC}(G)/O(\log|G|)\le \text{Sparsity}(G)/O(\log |G|) \le \text{MCF}(G) \le \text{NC}(G).
\end{equation}
The Li and Li conjecture asserts that the rightmost `$\le$' is indeed an equality. Our main result is that either 
the conjecture is true, or it must be nearly `completely false': the gap between NC$(G)$ and MCF$(G)$ 
can be as high as poly-logarithmic in $|G|$. 


\begin{theorem}
\label{thm:Main1}
Given a graph $G$ that achieves a gap of $1+\epsilon$ between the multicommodity flow rate and the network coding rate, we can construct an infinite family of graphs $\widetilde{G}$ that achieve a gap of $O\left(\log|\widetilde{G}|\right)^c$ for some constant $c < 1$ that depends on the original graph $G$. 
\end{theorem}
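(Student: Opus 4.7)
I propose to follow the roadmap sketched in the abstract: construct a graph tensor product that multiplies the NC/MCF gap while blowing up the graph size only polynomially, and then iterate this product against the given instance $G$. The plan breaks into four interlocking steps.

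First, I will define a binary operation $\otimes$ on capacitated undirected graphs with source--sink pairs such that (a) $|V(G_1 \otimes G_2)|$ is bounded by a fixed polynomial in $|V(G_1)|\cdot|V(G_2)|$, say $(|V(G_1)|\cdot|V(G_2)|)^p$ for a constant $p>1$; (b) the terminal set of $G_1 \otimes G_2$ naturally combines the pairs of the two factors (e.g., a new pair per choice of one pair from each factor); and (c) both flows and cuts of the factors embed into the product in a structured way. The construction is the heart of the proof; natural candidates involve either replacing each terminal pair of one factor with a capacity-scaled copy of the other, or taking a vertex product together with cross-connections that encode both factors' structures.

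Next, I will establish the two key inequalities. For $\mathrm{NC}(G_1 \otimes G_2)\ge\mathrm{NC}(G_1)\cdot\mathrm{NC}(G_2)$, the argument is constructive: compose the optimal coding schemes of the two factors along the independent directions of the product, so that the achievable rate is the product of the individual rates. For $\mathrm{MCF}(G_1 \otimes G_2)\le\mathrm{MCF}(G_1)\cdot\mathrm{MCF}(G_2)$, I will invoke LP duality: the MCF dual minimizes $\sum_e c(e)\ell(e)$ over vertex metrics $\ell$ normalized so that $\sum_i d_i\,\mathrm{dist}_\ell(s_i,t_i)\ge 1$. Given optimal duals $\ell_1,\ell_2$ for the factors, I will combine them (for instance as a sum of pull-backs along the two product coordinates) into a metric on $G_1\otimes G_2$ that is dual-feasible and whose value is at most $\mathrm{MCF}(G_1)\cdot\mathrm{MCF}(G_2)$.

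Finally I iterate: set $G^{(1)}:=G$ and $G^{(t+1)} := G^{(t)}\otimes G$. The two inequalities above imply $\mathrm{gap}(G^{(t)})\ge(1+\varepsilon)^{t-1}$, and the size bound gives the recurrence $|V(G^{(t+1)})|\le(|V(G^{(t)})|\cdot|V(G)|)^p$, which solves to $\log|V(G^{(t)})| = \Theta(p^t\log|V(G)|)$. Writing gap as a function of $\log|V(G^{(t)})|$ yields $\mathrm{gap}(G^{(t)})\ge(\log|V(G^{(t)})|)^c$ with $c = \log(1+\varepsilon)/\log p$; any constant $p>1+\varepsilon$ (e.g.\ $p=2$ for small $\varepsilon$) gives $c<1$, as required. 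Note that Step 1 must blow up the size super-linearly: a purely linear blow-up would combine with multiplicative gap amplification to force $(1+\varepsilon)^t \leq O(t\log|V(G)|)$, violating the known $O(\log|V|)$ ceiling on the gap.

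The main obstacle is the MCF upper bound. The NC multiplicativity is a "bottom-up" statement that is essentially routine once the product is defined properly; by contrast, the MCF bound is a "top-down" claim that constrains every multicommodity flow in the product, including those that exploit the product's rich routing possibilities in ways unanticipated by the factor structure. Producing a sharp dual certificate is delicate --- naive metric combinations (pure sums, products, or tensor norms) may fail to be feasible or may be too loose --- and the exact definition of $\otimes$ in Step 1 must be co-designed with the dual argument of Step 3 so that optimal factor metrics lift cleanly to a dual-feasible product metric hitting the target value.
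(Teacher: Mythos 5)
Your high-level plan matches the paper's: define a graph tensor that multiplies NC/MCF gaps, prove the NC lower bound by composing coding schemes and the MCF upper bound by lifting dual certificates, then iterate. But two of your working assumptions fail in a way that breaks the final calculation. You posit a size recursion $|V(G_1 \otimes G_2)| \le (|V(G_1)|\cdot|V(G_2)|)^p$ with a \emph{fixed} exponent $p$. The paper's tensor cannot deliver this. The difficulty you flag at the end --- that a naive dual metric on the product might be infeasible --- is real, and the fix is to route each edge of $G_1$ through a separate copy of $G_2$ according to a high-girth colored bipartite graph $B$; the girth requirement (which rules out ``cheating paths'' that hop across copies without accruing the intended distance) scales like $l_1 l_2/(w_1 w_2)$, where $l_i$ is the maximum dual source--sink distance and $w_i$ the minimum dual edge weight of $G_i$. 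That aspect ratio multiplies under tensoring, so the size blow-up per step is $\operatorname{poly}(n)^{g}$ with a girth $g$ that itself grows geometrically with each iteration --- there is no universal constant $p$. With a blow-up whose exponent grows geometrically in $t$, your linear iteration $G^{(t+1)} := G^{(t)}\otimes G$ gives $\log\log|G^{(t)}| = \Omega(t^2)$ while $\log(\mathrm{gap}) = \Theta(t)$, so the gap is only $(\log|G^{(t)}|)^{o(1)}$, short of the claimed $(\log|G^{(t)}|)^c$.

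The paper's workaround is to iterate by \emph{squaring}, $A_i := A_{i-1} \otimes A_{i-1}$, so that the gap $(1+\varepsilon/2)^{2^i}$ and $\log\log(\mathrm{size})$ both scale like $2^i$; the ratio of their logs stays bounded, which is what yields the constant exponent $c<1$. This also means the tensor is necessarily parameterized by the dual solutions of its factors (to set the girth), not just the underlying graphs. Two smaller points: the correct lift of the dual metric here is the \emph{product} of the factor edge weights, $w_e = w_{1e_1}\cdot w_{2e_2}$, not a sum of pull-backs --- feasibility of this product metric is precisely what the high-girth bipartite embedding buys; and your sanity-check arithmetic is slightly off (with linear size growth $\log|G^{(t)}| = \Theta(\log t)$, not $\Theta(t\log|V(G)|)$), though the conclusion that a linear blow-up is impossible is correct.
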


In order to prove Theorem~\ref{thm:Main1}, we will show a simpler construction that can be applied repeatedly.
 
\begin{theorem}
\label{thm:Main2}
Given a graph $G$ of size $n$ with a gap of $1+\epsilon$ between the multicommodity flow rate and the network coding rate, we can create another graph $G'$ of size $n^{c^2}$ and a gap of $(1+\epsilon)^{2}$, where $c$ depends on the diameter of the graph $G$. 
\end{theorem}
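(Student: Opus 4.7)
The plan is to prove Theorem~\ref{thm:Main2} by defining a binary graph operation $G_1 \otimes G_2$ on undirected instances with source-sink demands that multiplies the NC-to-MCF gap, and then setting $G' := G \otimes G$. The intended picture is \emph{nested routing}: an outer copy of $G_1$ carries ``meta-commodities'', each of which is itself a symbol produced by routing inside a copy of $G_2$. Commodities of $G'$ are pairs $(i,j) \in [k_1] \times [k_2]$ so that the number of commodities multiplies. In order for the nested scheme to be realizable, the outer gadget must reserve enough room to host a private inner copy along every candidate $(s_i, t_i)$-path; this is where the diameter of $G$ enters, since such paths have length at most $\mathrm{diam}(G)$, and the size bound $|G'| = n^{c^2}$ arises from placing $\Theta(n^{c})$ inner copies along the $\Theta(n^{c})$ outer routes.

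For the lower bound $\mathrm{NC}(G') \ge \mathrm{NC}(G)^2$, fix a network coding scheme $\sigma$ on $G$ that achieves rate $r = \mathrm{NC}(G)$ simultaneously on every commodity. On $G'$, to route pair $(i,j)$, we run $\sigma$ on the inner copy of $G_2$ to code commodity $j$ at rate $r$, and then feed its output symbols into $\sigma$ on the outer copy of $G_1$ as the traffic for commodity $i$. By construction, inner and outer edges of the product are edge-disjoint, so the two schemes do not interfere, and the composition achieves rate $r \cdot r = r^2$ for the pair. Taking the minimum over $(i,j)$ gives $\mathrm{NC}(G') \ge r^2 = \mathrm{NC}(G)^2$.

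The crux is the upper bound $\mathrm{MCF}(G') \le \mathrm{MCF}(G)^2$. My intended argument is to decompose any fractional MCF solution on $G'$: the flow for commodity $(i,j)$ splits into an \emph{inner} component (staying within a copy of $G_2$), which projects to a feasible flow for commodity $j$ on $G_2$, and an \emph{outer} component (crossing between copies), which projects to a feasible flow for commodity $i$ on $G_1$. Aggregating over pairs bounds the product rate by $\mathrm{MCF}(G_1) \cdot \mathrm{MCF}(G_2) = \mathrm{MCF}(G)^2$. The main obstacle is making this projection tight: a priori, a path in $G'$ need not respect the tensor structure, and capacities are shared across the replicated inner copies. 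The diameter-dependent blow-up $c$ is precisely what buys us the needed rigidity -- by providing a fresh inner copy per outer route of length at most $\mathrm{diam}(G)$, every unit of flow can be unambiguously charged to one outer and one inner commodity. Critically, the argument must be made \emph{directly} on flows and not routed through sparsity via~\eqref{eq:2}, because the iteration in Theorem~\ref{thm:Main1} would compound any $\log n$-slack into the final polylog target.
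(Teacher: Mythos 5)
Your high-level picture (gap amplification by a product construction, NC lower bound by composing coding schemes, MCF upper bound by ``projecting'' to the factors, with a diameter-controlled size blow-up) is the right shape, but the crux of the argument is missing and the construction itself diverges from what is actually needed.

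First, the construction. You propose commodities indexed by pairs $(i,j) \in [k_1]\times[k_2]$ and ``a private inner copy along every candidate $(s_i,t_i)$-path.'' Neither of these survives scrutiny: the number of $(s_i,t_i)$-paths can be exponential in $n$ (not $n^{\mathrm{diam}}$), paths overlap, and a multicommodity flow is under no obligation to travel along one enumerated path. The paper's construction is different in a key way: it makes many copies of $G_1$ and replaces each \emph{edge} (not each path) of each copy of $G_1$ with a copy of $G_2$ aligned at one of $G_2$'s source-sink pairs. Commodities of the tensor come only from the copies of $G_1$ (there are $n_1k_1$ of them); the commodities of $G_2$ are never demands in $G'$ — they exist only so that network coding inside a $G_2$-copy can deliver an effective capacity boost of $(1+\varepsilon_2)$ on the edge it replaces. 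Which copies of $G_1$ connect to which copies of $G_2$ is governed by an auxiliary \emph{colored bipartite graph} $B$, and this $B$ is the load-bearing object.

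Second, and more importantly, your MCF upper bound is the step that actually carries the theorem and it is exactly where the hand-waving occurs. You correctly notice that ``a path in $G'$ need not respect the tensor structure,'' i.e.\ there are cheating paths that hop from one $G_1$-copy to another through a shared $G_2$-copy. But ``a fresh inner copy per outer route'' does not rule this out: once two $G_1$-copies share a $G_2$-copy, a flow can escape and the per-pair decomposition into an inner flow in $G_2$ and an outer flow in $G_1$ fails. The paper's actual fix is to require $B$ to have high girth, so that escaping into a different $G_1$-copy and returning forces at least $g$ hops through distinct $G_2$-copies; then the bound is established not by a primal projection of flows but by exhibiting a feasible \emph{dual} (metric) solution for the tensored graph whose edge weights are products of edge weights of duals for $G_1$ and $G_2$, and showing that under this metric every cheating path already has length at least $l_1 l_2$, so it cannot beat the intended product path. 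Here $g$ is chosen as roughly $l_1 l_2 / (w_1 w_2)$, where $l_i$ are the maximum source-sink dual distances and $w_i$ the minimum nonzero dual edge weights — this is the quantity your ``$c$ depends on the diameter'' is gesturing at, and the size $n^{c^2}$ comes from the size of a girth-$2g$ bipartite graph, not from enumerating routes. One also needs $G_2$ to have all its sources and sinks on distinct vertices (otherwise transitions between $G_1$-copies can be free), and all dual weights nonzero — both are preprocessing steps the paper does explicitly and your proposal does not account for. Without the high-girth bipartite gluing and the dual-metric argument, the key inequality $\mathrm{MCF}(G') \le \mathrm{MCF}(G)^2$ is not established.
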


\paragraph{Proof outline.} The main part of the construction is to define a \emph{graph tensor} on graphs $G_1$ and $G_2$ that have gaps of $1+\epsilon_1$ and $1+\epsilon_2$ respectively, between the multicommodity flow rate and the network coding rate, which gives a new graph $G$ with a gap of $(1+\epsilon_1)(1+\epsilon_2)$  while keeping a check on the size of $G$. We can then take a graph with a small gap and tensor it with itself to produce a graph with an even larger gap. 
Repeatedly tensoring the output of the previous iteration with itself will give us Theorem~\ref{thm:Main1}.


\begin{figure}
\begin{minipage}[t]{0.475\textwidth}
\centering
\includegraphics[width=1\linewidth]{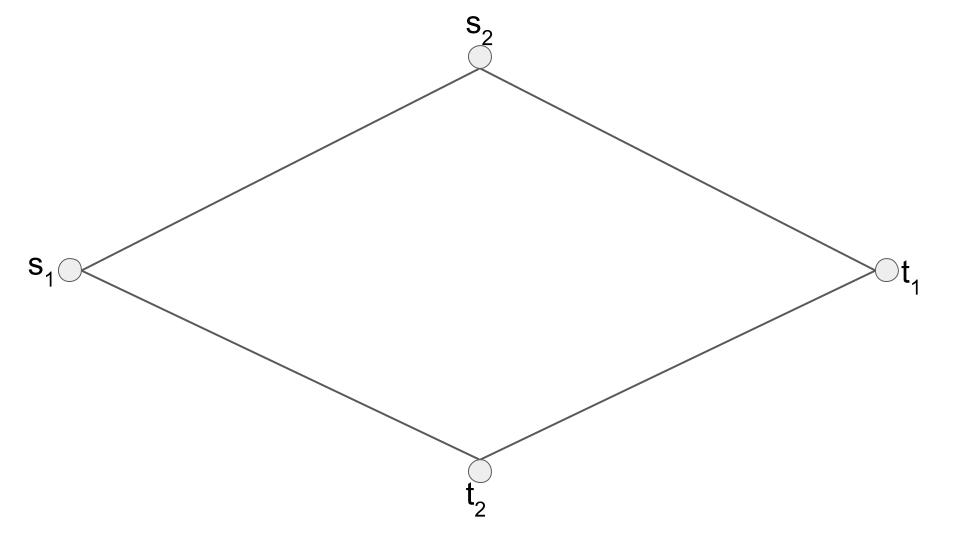}
\caption{Graph $G_1$ and $G_2$.}
\label{fig:g1}
\end{minipage}
\hfill
\begin{minipage}[t]{0.475\textwidth}
\centering
\includegraphics[width=1\linewidth]{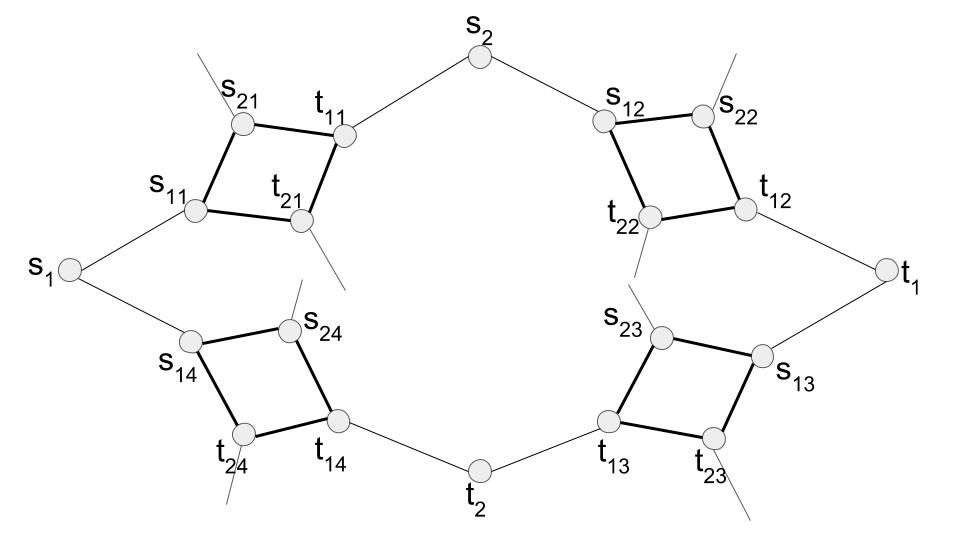}
\caption{Basic gadget that embeds a copy of $G_2$ into each edge of $G_1$. The edges coming out of the copies will be used to connect to other copies of the outer graph ($G_1$). Labelled source-sink pairs of $G_2$ are just for reference. The thin edges are just an artifact and their respective end points represent a single vertex. }
\label{fig:gadget}
\end{minipage}
\end{figure}

In $G_2$, network coding allows us to send more information from every source to its corresponding sink than what simple flows allow. We construct a gadget for the graph tensor exploiting this fact as in Figure \ref{fig:gadget}. We replace each edge of $G_1$ by a copy of $G_2$ with endpoints at a deterministic source-sink pair. We keep the source-sink pairs of $G_1$ and edges of $G_2$. 

For simplicity, assume that each edge in Figure~\ref{fig:g1} has capacity 1. Then the effective capacity at each edge seen by $G_1$ under network coding is more than that under flows. Intuitively, replacing each edge with a source-sink pair of $G_2$ should give network coding a ``capacity advantage'' of $(1+\epsilon_2)$ over multicommodity flow. 
Since the information transferred grows linearly with the capacity, the new information exchanged between source-sink pairs in the gadget under network coding should be $(1+\epsilon_1)(1+\epsilon_2)$ times the information exchanged under flows. 


\begin{figure}
\begin{subfigure}[t]{0.475\textwidth}
\centering
\includegraphics[width=1\linewidth]{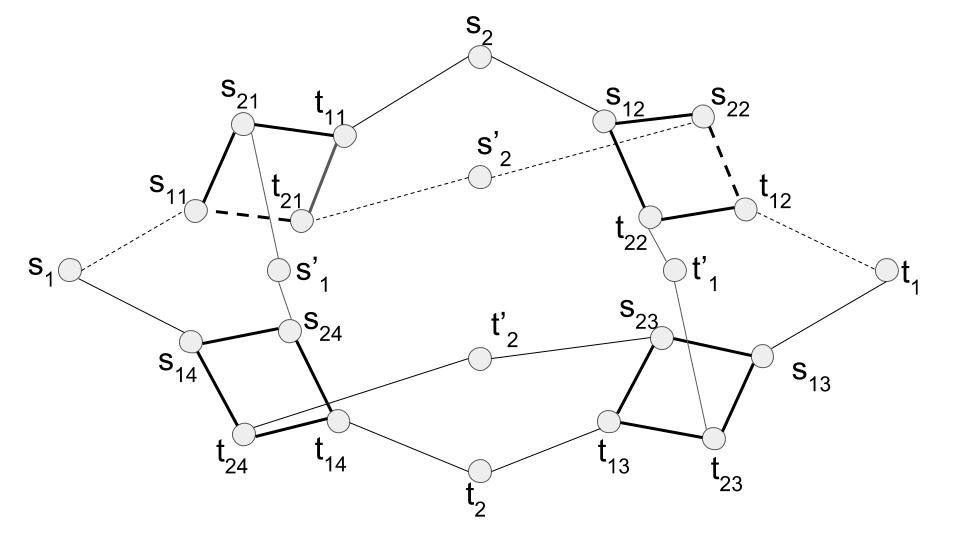}
\caption{A cheating path is highlighted with dashed edges.}
\label{fig:cheat}
\end{subfigure}
\hfill
\begin{subfigure}[t]{0.475\textwidth}
\centering
\includegraphics[width=1\linewidth]{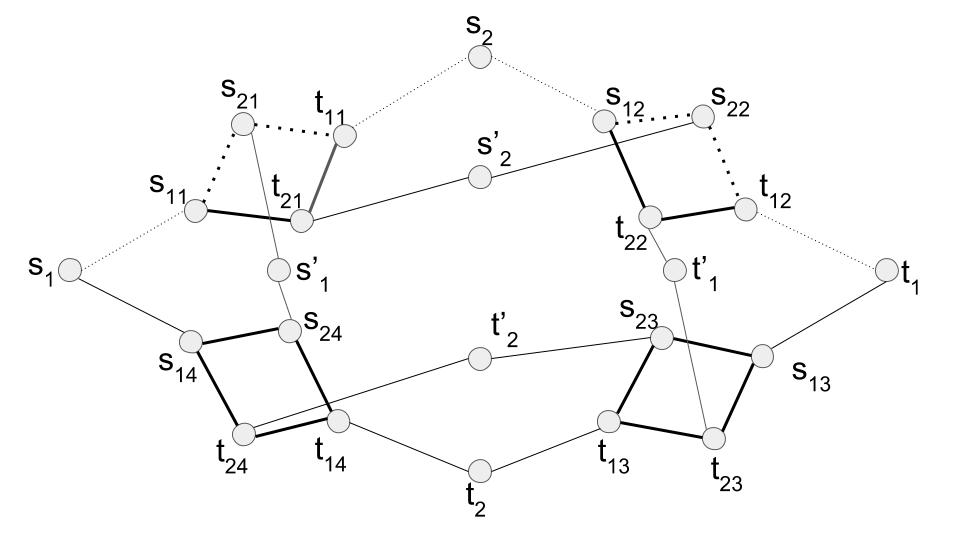}
\caption{An honest path is highlighted with dotted edges.}
\label{fig:honest}
\end{subfigure}
\caption{Two copies of $G_1$ (with source-sink pairs $s_1-t_1,s_2-t_2$ and $s'_1-t'_1,s'_2-t'_2$, respectively) have $8$ edges. These edges are replaced with $4$ copies of $G_2$, each copy having two source-sink pairs. All edges in the first copy of $G_1$ are replaced with $s_{1X}$-$t_{1X}$ source-sink pairs of $G_2$; edges in the second copy of $G_2$ are replaced with $s_{2X}$-$t_{2X}$ source-sink pairs of $G_2$}
\label{fig:tensor}
\end{figure}


We need to be careful because $G_2$ exhibits a gap only when we need to send information from all sources simultaneously. We address this by adding more copies of $G_1$ to the graph tensor and replacing its edges with other source-sink pairs of the copies of $G_2$ as in Figure~\ref{fig:tensor}. 
In each copy of $G_1$ we replace all edges with the same source-sink pair of $G_2$. At the same time, each copy of $G_2$ serves to replace the same edge in all copies of $G_1$.  This is done to facilitate the proof of the upper bound on the MCF rate in the resulting graph. 

Our work is not done here. It is easy to get a lower bound on the network coding rate on the final graph $G$ by just showing a network coding solution. For this, informally, we just compose the network coding solutions for $G_1$ and $G_2$. The hard part is to get an upper bound on the multicommodity flow rate. Since MCF is a linear program, we can upper bound  the value of multicommodity flows by looking at the dual solution of its relaxed linear program. This dual, described in Section~\ref{sec:definitions}, involves computing shortest distances between source-sink pairs under some metric. This metric readily tensorizes: we can take the length of an edge in a copy of $G_2$ to 
be the product of the length of that edge in $G_2$ times the length of the edge(s) in $G_1$ this particular copy of $G_2$ is replacing. 
The problem is to get the lengths of  {\em whole paths} to tensorize.

What could go wrong? Consider Figure~\ref{fig:tensor}.
We would ideally want the dotted paths as in Figure~\ref{fig:honest} to be the  shortest path between $S1$ and $T1$ in $G'$,
since its length is the length of the shortest $s_1-t_1$ path in $G_1$ times the length of the shortest $s_{1X}-t_{1X}$ path in $G_2$. 
Unfortunately, during the tensoring operation we inadvertently introduce additional $s_1-t_1$ paths that do not correspond to ``products'' of paths 
from $G_1$ and $G_2$. 
For example, the dashed path in Figure~\ref{fig:cheat} is a ``cheating'' path which can make the distance between $s_1$ and $t_1$ shorter than expected. We deter the use of ``cheating'' paths by increasing the number of hops between different copies of $G_1$ that a path has to take before it reaches the same copy again.  The technical ingredient which prevents such cheating is in the design of the bipartite graph which tells which copy of $G_1$ should use which copies of $G_2$ (and how to connect them). To prevent cheating, the bipartite graph will need to be of high girth.  The crucial part of the construction is thus constructing high girth bipartite graphs while still keeping check on the size so as to get a $O(\log(\text{size})^c)$ ($c<1$) gap when the tensor is applied repeatedly.

\paragraph{Discussion.} A natural question arises: Can we have a tensor construction that starts with a graph $G$ having some gap between the multicommodity flow rate and the network coding rate and outputs a graph $G'$ with gap $\omega(\log(|G'|))$, thus contradicting \eqref{eq:3} and proving the Li and Li conjecture? We address this question with respect to our construction in Section \ref{sec:sparsity}. We show that the MCF vs. Sparsest Cut 
gap tensorizes for our construction, and thus the tensorization process on its own cannot cause the gap to exceed $O(\log |G'|)$. 

At the same time, if one's goal is to prove the conjecture, it might be easier to reach a contradiction to the gap being $(\log|G|)^c$ than to a constant gap.

\section{Preliminaries}
\label{sec:definitions}

In this section we introduce the problems that we will be interested in studying and any relevant notation. Where appropriate, we use the same notation and definitions as \cite{Kleinberg:Networks, KleinbergHR04}. 

When $G=(V,E)$ is a graph, we specify vertex set of $G$ with $V(G)$ when the underlying graph $G$ is not clear from the context. Similarly $E(G)$ represents the edge set for graph $G$. The set $\{1,2,...,n\}$ is represented by $[n]$. $I(G)$ denotes the set of $k$ source-sink pairs $(s_i,t_i),i\in [k], s_i,t_i\in V$. Given a bipartite graph $B=(V_1,V_2,E)$, we denote the left side of the graph by $V_1(B)$ and the right by $V_2(B)$. A bipartite graph is $(r,s)$ bi-regular when each vertex on the left side has degree $r$ whereas each vertex on right side has degree $s$. 

\subsection{Network coding}

\begin{definition}
An instance of the $k$-pairs communication problem consists of 
\begin{itemize}
	\item a graph $G = (V, E)$,
	\item a capacity function $c: E \rightarrow \mathbb{R}^+$, 
	\item a set $I$ of commodities of size $k$, each of which can be described by a triplet of values $(s_i, t_i, d_i)$ corresponding to the source node, the sink node and the demand of commodity $i$. 
\end{itemize}
\end{definition}

In line with \cite{Kleinberg:Networks}, for undirected graphs we consider each edge $e$ as two directed edges $\vec e, \cev e$, whose capacities will be defined later. It will also be convenient to think of source and sink nodes as edges. Therefore, for every source and sink pair $(s_i, t_i)$, we create new nodes $S_i, T_i$ and connect them via single edges to $s_i, t_i$ respectively. These edges are of unbounded capacity and we will refer to these as the source and sink edges respectively. Every source $S_i$ wants to communicate a message to its sink. 

We give the formal definition of a network coding solution in Appendix~\ref{sec:NC}. Let $M_i$ be the set of messages the $i$-th source-sink pair wants to communicate, and $M = \prod_i M_i$. Let $\Delta(e)$ be the alphabet of characters available at edge $e$. Informally, the solution to a network coding problem must specify for each edge $e$ a function $f_e: M \rightarrow \Delta(e)$, which dictates the character transmitted on that edge. The function $f_e$ must be computable from the characters on the incoming edges at the sender end point. The message at the source and sink edges of any commodity must agree. 

The network coding rate (henceforth known as coding rate) is the largest value $r$ such that for each source-sink pair at least $r \cdot d_i$ information is transmitted while preserving the capacity constraint on all edges. 

\subsection{Multicommodity flow problems and sparsity cuts.}

A flow problem consists 
of a graph $G = (V,E)$ with $k$ commodities 
together with $k$ pairs of nodes 
$(s_i, t_i)$ and quantities $d_i$. The 
goal is to transmit $d_i$ units of 
commodity $i$ from $s_i$ to $t_i$ 
while keeping the total sum of 
commodities that go through a given 
edge $e$ below its capacity $c(e)$. 
There are many optimization problems 
surrounding this problem. We will focus 
on the following one: what is the 
largest $\lambda$ such that at least 
$\lambda$ fraction of each commodity's 
demand is routed? This is justified 
by assuming that no commodity is 
prioritized over another and that all 
resources are shared. We refer to this 
quantity as the flow 
rate of the graph. There are well-known linear programming formulations for these problems (see LP~\ref{eq:LP2} in Section~\ref{sec:MCFs} in the appendix).
Since we will be interested in providing provable upper bounds to the flow rate, it will suffice to look at the dual of this problem. In particular, we use the variables on the following dual LP to provide upper bounds on the flow rate of the sequence of graphs we create. 
We will refer to the $w_{(u,v)}$ as the weight of edge $(u,v)$ in the dual solution.



\begin{equation}
\label{eq:LP3}
\begin{aligned}
\text{minimize} & \sum_{u,v} w_{(u,v)} c(u,v) & & & \\
\text{subject to} & \sum_{(s_i, t_i)} l(s_i, t_i) d_i \ge  1 & & \textrm{(Distance Constraint)} \\
&\sum_{(u,v) \in p} w_{(u,v)} \geq l(s_i,t_i) & & \forall i \in [k], p \in P_i   \\
&w_{(u,v)} \geq 0 & & \forall (u,v) \in E \land \forall (u,v) = (s_i, t_i) \\ 
\end{aligned}
\end{equation}

This LP introduces a semi-metric on the graph which assigns weights to the edges. $l(s_i,t_i)$ is the shortest distance between $i-$th source-sink pair w.r.t. this metric. The goal is to minimize the weighted length of the edges of the graph while maintaining a certain separation between the source-sink pairs. Zero weight edges can be problematic for our graph tensor since they may reduce the weighted girth of the graph in ways we cannot account for. Our tensor, however, does not produce new zero weight edges. Therefore it suffices for our purposes to show that we can get rid of them at the beginning of the construction.

\begin{lemma}\label{standard}
If $G$ is a graph such that the gap between the flow rate and the coding rate is $(1+\epsilon)$, a new graph $G'$ can be constructed such that the gap does not decrease and all the edge weights in LP~\eqref{eq:LP3} are non-zero.
\end{lemma}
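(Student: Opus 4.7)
The plan is to iteratively contract any regular edge of $G$ whose weight in an optimal dual solution to LP~\eqref{eq:LP3} is zero, and to show that each such contraction leaves MCF unchanged while only increasing the network coding rate. Concretely: solve LP~\eqref{eq:LP3} on $G$ and obtain an optimal dual $\{w_e\}$; if every non-source/sink edge has $w_e>0$, we are done. Otherwise pick an edge $e=(u,v)$ with $w_e=0$, contract it to form $G/e$, re-solve the LP on $G/e$, and repeat. Since each step strictly decreases the edge count, the process terminates in at most $|E(G)|$ iterations.

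To justify that one contraction step does not shrink the gap, I would argue in two parts. For the coding side, merging the two endpoints of $e$ only enlarges the set of encoding operations available to the merged super-node, so $\mathrm{NC}(G/e)\ge \mathrm{NC}(G)$ — this is exactly the merging observation already used in the paper to bound $\mathrm{NC}$ by sparsity. For the flow side, contraction adds routing options, giving $\mathrm{MCF}(G/e)\ge \mathrm{MCF}(G)$. In the other direction I would observe that the truncated weighting $\{w_{e'}\}_{e'\ne e}$ is a feasible dual solution on $G/e$: because $w_e=0$, every shortest-path distance $l(s_i,t_i)$ is preserved under contraction (any path in $G/e$ lifts to a path in $G$ of identical weight, with or without an extra free traversal of $e$), so the distance constraint $\sum_i l(s_i,t_i)d_i\ge 1$ still holds and every path constraint descends. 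The dual objective is unchanged since $w_e\cdot c(e)=0$, so by LP duality $\mathrm{MCF}(G/e)\le \mathrm{MCF}(G)$. Combining the two inequalities yields $\mathrm{MCF}(G/e)=\mathrm{MCF}(G)$, and therefore the ratio $\mathrm{NC}/\mathrm{MCF}$ does not decrease.

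In the graph $G'$ produced by the final iteration, no optimal dual solution of LP~\eqref{eq:LP3} contains a zero weight on a regular edge — otherwise we would still be contracting — so any optimal dual solution assigns strictly positive weight to every edge, which is precisely the property required by the subsequent tensor construction. The step that most deserves care is the MCF-preservation argument above; it is not a deep obstacle, but it is the only place where one must invoke LP duality combined with the structural observation that a zero-weight edge is indistinguishable from a contracted edge in every shortest-path computation. A minor bookkeeping point is that the source and sink edges have unbounded capacity and hence are forced to weight zero in any dual; these must be excluded from the contraction step, but since the lemma is a statement about the "real" edges of the graph, this restriction is harmless for the downstream construction.
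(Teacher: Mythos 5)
Your proof is correct and takes essentially the same approach as the paper: contract zero-weight edges, observe that the truncated dual remains feasible with unchanged objective (so MCF does not increase), and that the coding solution carries over to the contracted graph (so NC does not decrease). The paper simply contracts all zero-weight edges in one step rather than iterating with re-solves; your iterative variant is harmless but unnecessary, and your final sentence should claim only that you have exhibited \emph{one} optimal dual with strictly positive weights on all regular edges (not that every optimal dual has this property), which is all the downstream construction needs.
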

\begin{proof} We defer the proof of this lemma to Section \ref{sec:MCFs} of the appendix.
\end{proof}
Interestingly, this lemma is not true for directed graphs.

%
%

 \section{Construction}

In this section we present the construction of our graph tensor and prove our main results, Theorems~\ref{thm:Main1} and~\ref{thm:Main2}. The construction takes two graphs with small gaps and tensors them in such a way that the resulting graph has a gap equal to the product of the previous gaps. Iteratively tensoring a graph with a small gap with itself will yield our main results. 

Throughout this section, when referring to a graph $G_i=(V_i,E_i),i\in[2]$, $k_i$ is the number of source-sink pair, $v_i$ the number of vertices and $m_i$ the number of edges. The capacity of edge $e \in E_i$ will be denoted by $c_{ie}$. 

\subsection{Overview}\label{subsec:overview}
 As mentioned in Section \ref{sec:intro}, we need a bijection between the graph tensor on $G_1$ and $G_2$ and bipartite graphs. We represent the copies of $G_1$ by numbered nodes on the left side of the bipartite graph (say $B$) and copies of $G_2$ by nodes on the right side of $B$. We add an edge $(i,j)$ in $B$ when an edge in the $i$-th copy of $G_1$ got replaced by the $j$-th copy of $G_2$ aligned at a specific source-sink pair. But this definition of bipartite graph $B$ loses information about which specific edge was replaced with which specific source-sink pair. Thus, we consider a variant of bipartite graphs: \emph{colored bipartite graphs}, which have two colors associated with each edge. We will use the first color to represent the edge that got replaced in a copy of $G_1$ and the other to represent the source-sink pair of $G_2$ that replaced that edge. Thus, edges of $B$ get colored from the set $[m_1]\times[k_2]$. Note that each vertex on the left side has degree $m_1$ and that on right hand side has degree $k_2$. The formal description of colored bipartite graphs and \emph{graph tensor} based on this idea is given in Subsection~\ref{subsec:tensor}.
 
 As discussed in Section \ref{sec:intro}, we can avoid ``cheating'' paths by increasing the number of hops that a dashed path (Figure~\ref{fig:tensor}) needs to take to come back to the same copy of $G_1$. Our first requirement would be for the colored bipartite graph $B$ to have high girth. Lemma~\ref{high girth} states the existence of nearly optimal sized high girth bipartite graphs and Subsection~\ref{subsec:colorbipartite} shows how to construct specific colored bipartite graphs (as in Subsection \ref{subsec:tensor}) of high girth. 

Is having a high girth $B$ sufficient for the number of hops to be large? No. When $G_2$ has two sources at the same vertex, the end points (on source side) of the edges in copies of $G_1$ that these two source-sink pairs replaced will collapse on the same vertex implying that we can move between these copies of $G_1$ instantly without traveling along any edge in the tensored graph. But, we would have travelled two consecutive edges in $B$. To remedy this, we condition on the graph $G_2$ to have all sources and sinks lying on distinct vertices. Note that the length of the cheating paths is defined with respect to the weights of edges in a dual solution. Thus, we cannot just transfer the source/sinks to leaves at the corresponding vertex through infinite capacity edges as they would always get weight $0$ in the dual. In Subsection \ref{subsec:def}, we present a way to modify graph $G_2$ to satisfy the above condition.

The multicommodity flow rate for the tensored graph is upper bounded by constructing a dual solution for it based on dual solutions for graphs $G_1$ and $G_2$. In Subsection \ref{subsec:main}, we show the dual construction and prove that the gap of the tensored graph is the xproduct of the previous gaps given appropriate girth. 

The last subsection of this section contains the details of repeated tensoring to get Theorem \ref{thm:Main1}.

\subsection{Graph Tensor}\label{subsec:tensor}
\begin{definition}\label{coloredb}\textbf{Colored Bipartite Graph:} We define $\mathcal{B}_{n_1,n_2,d_1,d_2,g,q_1,q_2}$ to be the set of bipartite graphs $(V_1,V_2,E)$ with girth $g$, $|V_1|=n_1, |V_2|=n_2$, such that degree of each vertex in $V_1$ and $V_2$ is $d_1$ and $d_2$ respectively and each edge is given a color $l_e$ in $[q_1]\times [q_2]$. Note that $n_1d_1=n_2d_2$.
\end{definition}
\begin{definition}\label{coloredtensor} \textbf{$T(G_1,G_2,B)$} is defined to be the graph tensor on directed graphs $G_1$ and $G_2$ based on the colored bipartite graph $B$. 

\end{definition}

For $T(G_1,G_2, B)$ to be defined, we need $B$ to satisfy the following properties: 
\begin{enumerate}
\item $B\in \mathcal{B}_{n_1,n_2,m_1,k_2,g,m_1,k_2}$ for some $n_1, g\in \{1,2,...\}$. 
\begin{itemize}
\item $G_1$ has $m_1$ edges and $G_2$ has $k_2$ source-sink pairs. Therefore the degrees of each node on left and right side should be $m_1$ and $k_2$, respectively. 
\item As mentioned in Subsection \ref{subsec:overview}, edges must be colored in the set $[m_1]\times[k_2]$.
\end{itemize}
\item  $\forall v\in V_2$, the set $B_v=\{b_e\mid \text{$e$ is incident to $v$ and $l_e=(a_e,b_e)$}\}$ is the complete set $[k_2]$. We want each source-sink pair of a copy of $G_2$ to replace some edge in a copy of $G_1$.
\item $\forall u\in V_1$, the set $A_u=\{a_e\mid$$e$ is incident to $u$ and $l_e=(a_e,b_e)$$\}$ is the complete set $[m_1]$. This ensures that each edge in a copy of $G_1$ is replaced.
\item $\forall v\in V_2$, the set $A_v=\{a_e\mid $ $e$ is incident to $v$ and $l_e=(a_e,b_e)$$\}$ has cardinality $1$. To define capacities in the new tensored graph naturally, we want that each source-sink pairs in a copy of $G_2$ replaces some unique edge in the corresponding copies of $G_1$.
\item  $\forall u\in V_1$, the set $B_u=\{b_e\mid \text{$e$ is incident to $u$ and $l_e=(a_e,b_e)$}\}$ has cardinality $1$. This ensures that each edges in a copy of $G_1$ is replaced by the same source-sink pair in different copies of $G_2$.
\end{enumerate}
We construct the graph $T(G_1,G_2,B)$ as follows:

\begin{itemize}
\item Enumerate the $n_1$ nodes in $V_1(B)$ and $n_2$ nodes in $V_2(B)$: $u^{(1)},u^{(2)},...,u^{(n_1)}$ and $v^{(1)},..,v^{(n_2)}$ respectively. 

\item Enumerate all the edges in $G_1$: $e^{(1)}_{G_1},e^{(2)}_{G_1},...,e^{(m_1)}_{G_1}$.
\item Create $n_1$ copies of $G_1$ (vertices and source-sink pairs) and $n_2$ copies of $G_2$ (vertices and edges). Represent the $j^{th}$ copy of graph $G_i,i\in\{1,2\}$ by $G_i^{(j)}$. Let $u^{(i)}\in V_1(B)$ represent the $i$-th copy of $G_1$ and $v^{(i)}\in V_2(B)$ represent the $i$-th copy of $G_2$.
\item For every edge $e=(u^{(i)},v^{(j)})$ colored $(p,k)$, merge the vertices $a_{G_1^{(i)}}$ and $s_{kG_2^{(j)}}$, and $t_{kG_2^{(j)}}$ and  $b_{G_1^{(i)}}$ in $T(G_1,G_2,B)$. Here, $e^{(p)}_{G_1^{(i)}}=(a_{G_1^{(i)}},b_{G_1^{(i)}})$ is the $p^{th}$ edge in the $i$-th copy of $G_1$ and $(s_{kG_2^{(j)}},t_{kG_2^{(j)}})$ is the $k^{th}$ source-sink pair of the $j^{th}$ copy of $G_2$. Informally, we are replacing each edge in a copy of $G_1$ by a copy of $G_2$ with end points aligned with the $k^{th}$ source-sink pair. Set the capacity of every edge $e'$ in this $j^{th}$ copy of $G_2$ to be $c_{1e^{(p)}_{G_1}}c_{2e'}$. This can be done consistently due to Property (4).
\item  Make all the edges undirected.
\end{itemize}

We define a tensor on directed graphs to allow for composition of network coding solutions of $G_1$ and $G_2$. The direction of an edge in $G_1$ tells us how to align the source-sink pair of $G_2$ on that edge. An example of a tensor is the graph in Figure \ref{fig:tensor}. 

\subsubsection{Standard Forms and Graph Extensions}\label{subsec:def}
Without loss of generality, we assume that for the graph $G$, all the demands $d_i, i\in[|I(G)|]$ are equal. Otherwise, we can just divide the demands into small demands of size $x$ such that $x$ divides all the initial rational demands. As discussed in Subsection \ref{subsec:overview}, we want all sources and sinks to lie on distinct vertices.
For all the dual solutions $D$ that we mention, we assume that $D$ does not contain any zero weight edges. This is justified by Lemma \ref{standard} and the fact that new duals constructed while tensoring, which will be defined later, don't create zero weight edges. We say a graph-dual pair ($G$, $D$) is in \emph{standard form} when all the assumptions above are satisfied.

We now present a construction whose goal is to make all $s_i,t_i, i\in[k]$ lie on distinct vertices.

\begin{definition} \label{distinct} Given a graph $G=(V,E)$ with all demands being equal to $d$, and a dual solution $D$ with $\frac{NC_G}{z(D)}\geq 1+\varepsilon$, $\forall \alpha, 0<\alpha<\varepsilon$, construct a new graph $G_{\alpha}$ such that all $s_i,t_i, i\in[k]$ lie on distinct vertices and $G_\alpha$ has a dual solution $D_{\alpha}(G)$ with $\frac{NC_{G_\alpha}}{z(D_{\alpha}(G))}$ being at least $\frac{1+\varepsilon}{1+\alpha}$. $G_\alpha$ is defined as the \textbf{$\alpha$-Extension of $G$ given $D$}. $z(D)$ is the objective value of dual solution $D$.
\end{definition}

Here, we just move the sources/sinks at a vertex to the leaves of the new edges added at this vertex while keeping edge capacities and dual weights in check. The detailed description of $G_\alpha$ and $D_{\alpha}(G)$ is given in Section \ref{sec:sf} of the Appendix.

\subsubsection{Colored Bipartite Graph Construction}\label{subsec:colorbipartite}

We need small, colored bipartite graphs for every degree and girth to define the graph tensor on any two graphs with gaps. We construct such graphs using biregular bipartite graphs with high girth. The following lemma states the existence of nearly-optimal sized colored bipartite graphs.

\begin{lemma}\label{CBexistence}
$\forall r,s,g\ge 3$, there exists a colored bipartite graph $C_{rsg}\in \mathcal{B}_{n_1,n_2,r,s,2g,r,s}$ with $n_1,n_2\le (9rs)^{g+3}$.
\end{lemma}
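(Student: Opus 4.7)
The plan is to obtain $C_{rsg}$ by equipping a high-girth biregular bipartite graph from Lemma~\ref{high girth} with an edge coloring drawn from $[r]\times[s]$.

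First, I would apply Lemma~\ref{high girth} with parameters $r$, $s$, and target girth $2g$ to produce an $(r,s)$-biregular bipartite graph $H=(V_1^H,V_2^H,E^H)$ of girth at least $2g$. The quantitative ingredient that drives the whole proof is the vertex-count bound from Lemma~\ref{high girth}: the Moore-type lower bound already forces $|V_1^H|,|V_2^H| \gtrsim ((r-1)(s-1))^{\Theta(g)}$, and Lemma~\ref{high girth} is being invoked precisely because it matches this up to polynomial factors in $r$ and $s$. Verifying that the resulting $|V_1^H|$ and $|V_2^H|$ fit inside the budget $(9rs)^{g+3}$ is then a matter of matching the hidden constants in that lemma to the base $9rs$ and the additive slack $+3$ in the exponent.

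Second, I would equip each edge of $H$ with a color in $[r]\times[s]$. Definition~\ref{coloredb} places no constraint on the coloring beyond its codomain, so for literal membership in $\mathcal{B}_{n_1,n_2,r,s,2g,r,s}$ \emph{any} assignment is admissible. The canonical choice — which aligns with the use of $C_{rsg}$ inside the graph tensor $T(G_1,G_2,C_{rsg})$ of the preceding subsection — is to select two functions $\alpha:V_2^H \to [r]$ and $\beta:V_1^H \to [s]$ and color the edge $(u,v)$ by $(\alpha(v),\beta(u))$. Structurally this forces the first coordinate of the color at a right vertex and the second coordinate at a left vertex to be constant, which is the shape required by properties (4) and (5) of Subsection~\ref{subsec:tensor}. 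The remaining ``bijection on neighborhoods'' conditions (properties (2) and (3)) can be arranged either by choosing $H$ to have a resolvable structure (i.e., $V_2^H$ partitions into $r$ classes each meeting every left neighborhood in exactly one vertex, and dually for $V_1^H$) or, failing that, by a small blow-up of $H$ by a factor polynomial in $rs$.

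The principal obstacle I expect is purely the size accounting. Since the target $(9rs)^{g+3}$ already carries both a multiplicative factor of $9$ on the base and three extra powers of $rs$ in the exponent, there is comfortable slack to absorb polynomial overhead from Lemma~\ref{high girth} and from any blow-up used to enforce the coloring structure. So beyond careful bookkeeping of constants — converting the bound delivered by Lemma~\ref{high girth} into the clean $(9rs)^{g+3}$ form — no genuine combinatorial difficulty is expected at this step; the combinatorial content has been offloaded to Lemma~\ref{high girth}.
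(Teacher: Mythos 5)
Your high-level plan coincides with the paper's: invoke Lemma~\ref{high girth} to get a high-girth $(r,s)$-biregular bipartite graph, then decorate it with a coloring of the shape $(\alpha(v),\beta(u))$ so that properties (4) and (5) of Subsection~\ref{subsec:tensor} hold automatically. You also correctly note that the Lemma, read completely literally, only demands \emph{some} coloring in $[r]\times[s]$, but that the downstream use in $\text{Tensor}(G_1,G_2,D_1,D_2)$ forces the stronger ``rainbow neighborhood'' conditions (2) and (3) as well, so those are what must really be arranged.

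The gap is precisely at the point you dismiss as bookkeeping. You write that properties (2) and (3) ``can be arranged either by choosing $H$ to have a resolvable structure \dots or, failing that, by a small blow-up of $H$ by a factor polynomial in $rs$,'' and conclude ``no genuine combinatorial difficulty is expected.'' That sentence is the entire lemma. Lemma~\ref{high girth} gives no resolvability guarantee, so you are committed to the blow-up route, and a blow-up must be specified concretely and then shown not to destroy the girth — which is nontrivial because na\"ive blow-ups (e.g.\ replacing each vertex by an independent set and each edge by a complete bipartite gadget) introduce 4-cycles. The paper's construction makes $s$ copies of the vertex set and rewires edges by a cyclic shift modulo $s$ (step (3) of the construction of $H$), which simultaneously (i) gives every right vertex all $s$ distinct $b$-colors, (ii) keeps $b$ constant on each left vertex, and (iii) preserves girth — the last via the projection argument that a cycle in $H$ maps to a closed walk in $B$ with no immediate backtracking, hence contains a cycle of no greater length. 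This is then repeated once more with roles swapped for the $a$-colors. None of that is routine constant-matching; it is the combinatorial content you offloaded.

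Your size accounting is also looser than you suggest: Lemma~\ref{high girth} gives $(9rs)^{g+2}$ vertices, and the two-stage blow-up multiplies by exactly $rs$, landing at $rs\cdot(9rs)^{g+2}\le(9rs)^{g+3}$. The ``$+3$ in the exponent'' is not comfortable slack; the move from $g+2$ to $g+3$ is consumed entirely by the $rs$-fold blow-up needed to make the coloring work. To complete the proof you would need to spell out the cyclic-shift construction (or an equivalent), verify the rainbow conditions it yields, and give the girth-preservation argument.
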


\begin{proof} We defer the detailed construction and proof of the next lemma to Section~\ref{sec:cbg} of the Appendix. 
\end{proof}

 
\subsubsection{Gap Amplification}\label{subsec:main}
We are given $G_1$ and $G_2$ in standard form with $G_i$, $i \in [2]$ having gap $(1+\varepsilon_i)$. Let $N_i$ be the optimal network coding solution for $G_i, i\in[2]$. Construct a directed graph $G_1'$ from $G_1$ by replacing each (undirected) edge $e=(u,v)\in E(G_1)$ of capacity $c_{1e}$ with 2 directed edges $(u,v)$ and $(v,u)$ of capacities $c_{1eu}$ and $c_{1ev}$ respectively. Here, $c_{1eu}$ and $c_{1ev}$ are the capacities of edge $e$ used by $N_1$ in the defined directions. Note that  $c_{1eu}+ c_{1ev}\le c_{1e}$. Without loss of generality, assume $c_{1eu}+ c_{1ev}= c_{1e}$, as we can always increase one of the capacities without changing the network coding solution to get the equality. Similarly, construct $G_2'$ from $G_2$ based on $N_2$. $G_1'$ and $G_2'$ has $m_1'=2m_1$ and $m_2'=2m_2$ edges respectively.

\begin{definition}\label{tensor}
\textbf{Tensor($G_1,G_2,D_1,D_2$)} is defined as $T(G_1',G_2',B')$, where $B'=C_{m_1'k_2g}$, $2g=\frac{2l_2l_1}{w_1w_2}$. Here, $l_1$ and $l_2$ are the maximum dual distances between any source-sink pair in the dual solutions $D_1$ of $G_1$ and $D_2$ of $G_2$ respectively. $w_1>0$ and $w_2>0$ are the minimum edge weights in the dual $D_1$ and $D_2$ respectively. 
\end{definition} 
Define \textbf{Dual$(G_1,G_2,D_1,D_2)$} to be the specific dual solution for Tensor($G_1,G_2,D_1,D_2$) that would be constructed in proof of Lemma \ref{DualG}. 

All the demands in the graph Tensor($G_1,G_2,D_1,D_2$) are equal to $\frac{d_1d_2}{q}$ where $q=\frac{V_1(B')}{k_2}=\frac{V_2(B')}{m_1}$. Here, $d_i$ is the demand of each commodity in graph $G_i, i\in[2]$. 
We use such a scaling to have a simple description of Dual$(G_1,G_2,D_1,D_2)$ in terms of $D_1$ and $D_2$. 

We prove the gap amplification part of Theorem \ref{thm:Main2} next. The details of how size grows are in the next subsection.

\begin{theorem}\label{gap}
Given graphs $G_1$ and $G_2$ in standard form and dual solutions $D_1$ and $D_2$ respectively, such that $\frac{NC_{G_i}}{z(D_i)}\ge 1+\varepsilon_i, i\in[2]$, $G=\text{Tensor}(G_1,G_2,D_1,D_2)$ has a dual solution $D= \text{Dual}(G_1,G_2,D_1,D_2)$ such that $\frac{NC_G}{z(D)}\ge(1+\varepsilon_1)(1+\varepsilon_2)$.
\end{theorem}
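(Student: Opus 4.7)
The plan is to decompose the bound into (i) a network coding lower bound $NC_G \geq q \cdot NC_{G_1} \cdot NC_{G_2}$ obtained by composing $N_1$ and $N_2$, and (ii) a tensorized dual solution $D$ with objective $z(D) = q \cdot z(D_1) \cdot z(D_2)$, where $q = n_1/k_2 = n_2/m_1$ is the replication factor forced by the bi-regularity of $B'$. Taking the ratio cancels $q$ and yields $NC_G/z(D) \geq (NC_{G_1}/z(D_1))(NC_{G_2}/z(D_2)) \geq (1+\varepsilon_1)(1+\varepsilon_2)$.

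For the network coding lower bound, I would compose the two schemes directly. Inside the $j$-th copy of $G_2$, which acts as edge $e_{p(j)}$ of several copies of $G_1$, every capacity is scaled by $c_{1 e_{p(j)}}$, so $N_2$ transports each of its $k_2$ source-sink pairs at rate $NC_{G_2} \cdot d_2 \cdot c_{1 e_{p(j)}}$. Each copy of $G_2$ therefore emulates the original edges of $G_1$ with capacity boosted by the uniform factor $NC_{G_2} \cdot d_2$, and running $N_1$ inside each of the $n_1$ copies of $G_1$ exchanges $NC_{G_1} \cdot d_1 \cdot NC_{G_2} \cdot d_2$ units per $G_1$-pair. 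Dividing by the $G$-demand $d = d_1 d_2/q$ yields $NC_G \geq q \cdot NC_{G_1} \cdot NC_{G_2}$.

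For the dual, I set the weight of every edge $e'$ in a copy of $G_2$ acting as edge $e_p$ of $G_1$ to $w(e') := w_1(e_p) \cdot w_2(e')$. Summing $w(e') c(e')$ over a fixed copy of $G_2$ contributes $w_1(e_{p(j)}) \cdot c_{1 e_{p(j)}} \cdot z(D_2)$, and each edge $e_p$ of $G_1$ is replicated across exactly $q$ copies of $G_2$, so the total objective is $q \cdot z(D_1) \cdot z(D_2)$. Verifying the distance constraint reduces to the key claim that for every $G$-source-sink pair $(S_a^{(i)}, T_a^{(i)})$, the shortest-path distance under $w$ satisfies $l(S_a^{(i)}, T_a^{(i)}) \geq l_1(s_a, t_a) \cdot l_2(s_{k_i^*}, t_{k_i^*})$, where by property (5) of Definition~\ref{coloredtensor} all edges in copy $i$ of $G_1$ are aligned to a common pair $k_i^*$. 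Granting this claim, and using that each $k \in [k_2]$ appears as $k_i^*$ for exactly $q$ indices $i$,
\[
 \sum_{(S,T)} l(S,T) \cdot d \;\geq\; \frac{d_1 d_2}{q} \sum_{i,a} l_1(s_a,t_a) \cdot l_2(s_{k_i^*}, t_{k_i^*}) \;=\; \Bigl(d_1 \sum_a l_1(s_a,t_a)\Bigr)\Bigl(d_2 \sum_k l_2(s_k,t_k)\Bigr) \;\geq\; 1,
\]
using the distance constraints of $D_1$ and $D_2$.

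The main obstacle is the shortest-path claim. I would parse any candidate path in $G$ as a concatenation of segments, each living inside a single copy of $G_2$ and entering/exiting at source-sink vertices (distinct by standard form). The sequence of $G_2$-copies visited then induces a closed walk in $B'$ based at $u^{(i)}$. An \emph{honest} segment replaces an edge $e_p$ of copy $i$ by a shortest $s_{k_i^*} \to t_{k_i^*}$ path inside a copy of $G_2$, contributing at least $w_1(e_p) \cdot l_2(s_{k_i^*}, t_{k_i^*})$, so an all-honest path has total length at least $l_1(s_a,t_a) \cdot l_2(s_{k_i^*}, t_{k_i^*})$. A \emph{cheating} path visits some $u^{(i')}$ with $i' \neq i$; after reducing the induced walk modulo backtracking, what remains is a non-trivial closed walk in $B'$, which by girth $2g$ must contain a cycle through at least $g$ distinct $G_2$-copies. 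Each such copy contributes at least one edge of weight $\geq w_1 w_2$ to the path, so the cheating path has length at least $g \cdot w_1 w_2 = l_1 l_2 \geq l_1(s_a, t_a) \cdot l_2(s_{k_i^*}, t_{k_i^*})$, because $l_1, l_2$ are the maximum source-sink distances in $D_1, D_2$. The standard-form requirement that all sources and sinks of $G_2$ live on distinct vertices is essential here: without it, a single copy of $G_2$ could provide a free copy-switch that does not register as a traversed edge in $B'$, invalidating the girth-based accounting.
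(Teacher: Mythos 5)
Your proposal matches the paper's proof in both structure and the key technical ingredients: the network coding lower bound via composition of $N_1,N_2$ (Lemma~\ref{NCCal}), the product-weight dual $w(e')=w_1(e_p)w_2(e')$ with objective $q\,z(D_1)z(D_2)$ (Lemma~\ref{DualG}), the honest-versus-cheating dichotomy for verifying the distance constraint, and the girth-of-$B'$ argument combined with the standard-form requirement (distinct source/sink vertices of $G_2$) to lower-bound cheating paths by $g\,w_1w_2 = l_1 l_2$. This is essentially the paper's argument reproduced faithfully, including the role of property (5) and the replication-factor bookkeeping.
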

In the next two lemmas, we lower bound the network coding rate and upper bound the multicommodity flow rate of $G$.
\begin{lemma}\label{NCCal} The coding rate for $G$ is at least $r_1r_2(1+\varepsilon_1)(1+\varepsilon_2)q$ where $r_1$ and $r_2$ are objective values of dual solutions $D_1$ and $D_2$ respectively.
\end{lemma}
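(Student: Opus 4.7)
The plan is to build a network coding scheme for $G$ by composing the optimal schemes $N_1$ for $G_1'$ and $N_2$ for $G_2'$, exploiting the fact that the capacities of each copy of $G_2$ inside $G$ are already scaled by the capacity of the $G_1$-edge it replaces. At the outer level, I would run $N_1$ independently in each of the $n_1$ copies $G_1^{(i)}$; whenever $N_1$ asks to transmit a symbol across an edge $e^{(p)}$ of $G_1^{(i)}$ in a given direction, that transmission is actually carried out by the corresponding source-sink pair of the $G_2$-copy that replaced $e^{(p)}$. The reason the construction used the directed refinements $G_1', G_2'$ (with $c_{1eu} + c_{1ev} = c_{1e}$) is precisely so that this alignment is unambiguous: the direction chosen by $N_1$ selects which endpoint of the embedded $G_2$-pair plays the role of source and which plays the role of sink, and the directed capacity accounting guarantees that the two directions never overload the underlying undirected edge of $G$.

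I would then verify the rate in three steps. First, since $r_i = z(D_i)$ is a feasible dual objective and hence upper bounds $\mathrm{MCF}_{G_i}$, the gap hypothesis gives $NC_{G_i} \ge (1+\varepsilon_i)\,r_i$. Second, by Property (4) of the colored bipartite graph $B'$, every edge in the copy $G_2^{(j)}$ has its capacity scaled by a common factor $c_{1 e^{(p)}_{G_1}}$, where $e^{(p)}$ is the unique $G_1$-edge that $G_2^{(j)}$ replaces. By linearity of network coding in the capacities, running $N_2$ on $G_2^{(j)}$ delivers $NC_{G_2} \cdot d_2 \cdot c_{1 e^{(p)}_{G_1}}$ units of information across each of its $k_2$ source-sink pairs. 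Third, this means that from the vantage point of $G_1^{(i)}$, each edge $e^{(p)}$ behaves like a channel of capacity $c_{1 e^{(p)}_{G_1}} \cdot NC_{G_2} \cdot d_2$. Running $N_1$ on this uniformly rescaled instance therefore delivers $NC_{G_1} \cdot d_1 \cdot NC_{G_2} \cdot d_2$ units of information per source-sink pair of $G_1^{(i)}$.

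Finally, since the demand of each source-sink pair in $G$ is $d_1 d_2 / q$, the rate achieved per pair is
\[
\frac{NC_{G_1} \cdot d_1 \cdot NC_{G_2} \cdot d_2}{d_1 d_2 / q} \;=\; NC_{G_1} \cdot NC_{G_2} \cdot q \;\ge\; (1+\varepsilon_1)(1+\varepsilon_2)\, r_1 r_2\, q,
\]
which is exactly the bound to prove. I would spell out explicitly that every physical undirected edge of $G$ lives inside exactly one copy of some $G_2^{(j)}$, and that the directed split $c_{1eu}+c_{1ev}=c_{1e}$ ensures the superposition of both directional uses of $N_2$ stays within capacity.

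The main subtlety — and really the only non-routine point — is formalizing the ``simulation'' of $N_1$ by the $k_2$ source-sink pairs inside each $G_2$-copy as a valid, causal network coding scheme on $G$. Since network coding as defined in Appendix~\ref{sec:NC} is a functional/algebraic description, one has to check that the functions assigned to the edges of $G$ are well-defined compositions of the $f_e$'s from $N_1$ and $N_2$, which requires choosing a common block length for both schemes and verifying that no cyclic dependency is introduced in the combined scheme (it is not, because the only ``new'' information flow inside each $G_2^{(j)}$ is between its aligned source-sink pairs, which are the endpoints of a directed $G_1$-edge in $N_1$). Once this compositional check is in place, the rate calculation above goes through verbatim.
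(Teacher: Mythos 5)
Your proposal is correct and takes essentially the same approach as the paper: compose the optimal coding schemes $N_2$ over $N_1$, observe that each replaced edge $e^{(p)}$ now has effective capacity $c_{1e^{(p)}}\cdot NC_{G_2}\cdot d_2$ by the capacity scaling built into the tensor, and divide by the demand $d_1d_2/q$ to get the stated rate. Your version is actually a bit more explicit than the paper's (which compresses all of this into a few sentences), and the extra paragraph on causality and block-length matching flags exactly the technical caveat the paper's proof sketch leaves implicit.
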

\noindent\emph{Proof Sketch}. The proof follows from composing the optimal network coding solutions of $G_1$ and $G_2$. The details are given in Section \ref{sec:gaproofs} of the Appendix.

\begin{lemma}\label{DualG} $D$ has objective value at most $r_1r_2q$ where $r_1$ and $r_2$ are the objective values of dual solutions $D_1$ and $D_2$ respectively.
\end{lemma}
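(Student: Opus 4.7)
The plan is to define $D = \mathrm{Dual}(G_1,G_2,D_1,D_2)$ as the natural product dual: every edge $e'$ lying in the $j$-th copy of $G_2$—which, by properties (4) and (5) of the colored bipartite graph $B'$, uniquely replaces some edge $e_p$ of $G_1$—receives weight $w(e') = w_{1,e_p}\cdot w_{2,e'}$, where the factors are the weights assigned by $D_1$ and $D_2$. The semi-metric variables are set to $l(s_i^{(l)}, t_i^{(l)}) = \mathrm{dist}_{D_1}(s_i,t_i) \cdot \mathrm{dist}_{D_2}(s_{k^{(l)}}, t_{k^{(l)}})$ for each source-sink pair in the $l$-th copy of $G_1$, where $k^{(l)}\in[k_2]$ is the unique alignment used by all copies of $G_2$ attached to that copy of $G_1$ (well-defined by property (5)).

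The objective computation is then direct. Each edge of each copy of $G_2$ contributes $w_{1,e_p}w_{2,e'}\cdot c_{1,e_p}c_{2,e'}$ to $\sum w\cdot c$. Grouping by the underlying $G_1$-edge and using that each $e\in E(G_1)$ is the replaced edge for exactly $q$ copies of $G_2$ (forced by the bi-regularity of $B'$ together with properties (3) and (4)), the sum factors as
\[
z(D) \;=\; q\cdot\Bigl(\sum_{e\in E(G_1)} w_{1,e}\,c_{1,e}\Bigr)\Bigl(\sum_{e'\in E(G_2)} w_{2,e'}\,c_{2,e'}\Bigr) \;=\; q\,r_1 r_2.
\]

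For $D$ to serve as a dual solution I must also verify feasibility. Non-negativity is inherited from $D_1,D_2$. With common demand $d = d_1 d_2/q$, the distance constraint factors by a symmetric counting argument: since each $k\in[k_2]$ appears as $k^{(l)}$ for exactly $q$ copies of $G_1$,
\[
\sum_{l,i} l(s_i^{(l)}, t_i^{(l)})\cdot d \;=\; \frac{d_1 d_2}{q}\,\Bigl(\sum_i \mathrm{dist}_{D_1}(s_i,t_i)\Bigr)\Bigl(q\sum_k \mathrm{dist}_{D_2}(s_k,t_k)\Bigr) \;\geq\; 1,
\]
using the distance constraints of $D_1$ and $D_2$.

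The delicate part, and the main obstacle, is the path constraint: every $(s_i^{(l)}, t_i^{(l)})$ path must have $D$-weight at least $l(s_i^{(l)}, t_i^{(l)})$. The \emph{honest} lift of the shortest $(s_i,t_i)$ path in $G_1$, traversing each replacement copy of $G_2$ along its shortest $(s_{k^{(l)}}, t_{k^{(l)}})$ path, realises this value with equality. For any other path, I would decompose it into maximal sub-walks contained in a single copy of $G_2$ and project the sequence of copies of $G_2$ visited together with the copies of $G_1$ touched at merge vertices onto a walk in $B'$. A path that ever leaves the copies of $G_2$ attached to $u^{(l)}$ and later returns produces a closed sub-walk in $B'$ at $u^{(l)}$ that is not a pure backtrack, and therefore contains a cycle of length at least $2g = 2l_1 l_2/(w_1 w_2)$. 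Each hop through a copy of $G_2$ in this cycle traverses a sub-path between two distinct source/sink vertices (which are distinct because $G_2$ is in standard form, guaranteed by the $\alpha$-extension of Subsection~\ref{subsec:def}), contributing at least $w_1 w_2$ to the $D$-weight; the cycle thus contributes at least $l_1 l_2$ on top of the weight spent traversing the honest portion. Since the honest path itself has length at most $l_1 l_2$, no cheating path can be shorter, completing the proof of feasibility.
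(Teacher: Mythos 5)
Your proposal matches the paper's proof essentially step for step: the same product dual on edge weights, the same factoring $z(D) = q\,r_1 r_2$ by counting $q = n_1/k_2$ copies of $G_2$ per replaced $G_1'$-edge, the same distance-constraint verification via the uniform distribution of source-sink alignments across copies, and the same girth-plus-minimum-weight argument showing cheating paths cost at least $g\,w_1w_2 = l_1 l_2$. The only cosmetic difference is that the paper works a bit more explicitly with the directed auxiliary graphs $G_1',G_2'$ and their induced duals $D_1',D_2'$, but the content is the same.
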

\noindent\emph{Proof Sketch}. $G=T(G_1',G_2',B')$ where variables are as defined in Definition \ref{tensor}. For every edge $e\in E(G)$, $e$ is the undirected version of an edge in a copy of $G_{2}'$ (of say $e_2$ in $G_{2}'$) and this copy of $G_{2}'$ must have replaced a unique edge (say $e_1\in E(G_1')$) in different copies of $G_1'$. 
Edges $e_1$ and $e_2$ are directed edges but have undirected counterparts in $G_1$ and $G_2$. Let $w_{1e_1}$ and $w_{2e_2}$ be the weights given to the counterpart edges of $e_1$ and $e_2$ in dual solutions $D_1$ and $D_2$ respectively. Give weight $w_e=w_{1e_1}w_{2e_2}$ to edge $e$ in $D$. Note that $\forall e, w_e>0$ if $w_{1e_1},w_{2e_2}>0\forall e_1,e_2$. Thus, non-zero dual solutions $D_1$ and $D_2$ give a non-zero dual solution $D$ to graph $G$. We still need to show that $D$ is a valid dual solution for $G$. Since $B'$ has girth at least $\frac{2l_2l_1}{w_1w_2}$ and $G_2$ is in standard form, the dotted paths (as in Figure \ref{fig:tensor}) are the shortest paths with respect to dual $D$. We can then write the distances between source-sink pairs in $G$ in terms of the distance of this source-sink pair in $G_1$ w.r.t. $D_1$ and the distance of the source-sink pair in $G_2$ that replaced edges in this copy of $G_1$ w.r.t. $D_2$. This allows us to easily show the satisfiability of the distance constraint for $D$ when demands are as specified in Definition~\ref{tensor}.

The detailed proof is given in Section \ref{sec:gaproofs} of the Appendix. There we also show $z(D)=\frac{n_1}{k_2}z(D_1)z(D_2)=qr_1r_2$.

\begin{proof} \textbf{of Theorem \ref{gap}}: It follows from just dividing the lower bound on the network coding rate of $G$ and the upper bound on the objective value of $D$ obtained in Lemma \ref{NCCal} and Lemma \ref{DualG}.\end{proof}

In the next subsection, we show how to repeatedly apply this construction. Note that, we can only apply the tensor construction on graphs in \emph{standard form}. The following lemma allows us to tensor the new graph obtained with itself. 
\begin{lemma}\label{still standard} Given $G_1$ and $G_2$ in standard form, $\text{Tensor}(G_1,G_2,D_1,D_2)$ is also in standard form.
\end{lemma}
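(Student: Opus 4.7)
The plan is to verify the three conditions defining standard form for the pair $(G,D) := (\text{Tensor}(G_1,G_2,D_1,D_2), \text{Dual}(G_1,G_2,D_1,D_2))$: (i) all demands of $G$ are equal, (ii) the source/sink vertices of $G$ are pairwise distinct, and (iii) every edge of $G$ has strictly positive weight in $D$.

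Conditions (i) and (iii) are essentially forced by the construction. By Definition~\ref{tensor}, every commodity of $G$ carries demand $d_1 d_2 / q$, giving (i). For (iii), the construction in the proof of Lemma~\ref{DualG} assigns to each edge $e$ of $G$ the weight $w_e = w_{1 e_1} w_{2 e_2}$, where $e_1, e_2$ are the edges of $G_1, G_2$ from which $e$ descends; both factors are positive because $D_1$ and $D_2$ are in standard form, so $w_e > 0$.

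Condition (ii) is the substantive content and the step I expect to be the main obstacle, requiring a careful analysis of which vertex identifications the $T$-construction performs. The source/sink vertices of $G$ are inherited from the $n_1$ copies of $G_1$, and a source/sink vertex $s_i^{(a)}$ coming from the $a$-th copy is merged, for each edge $e^{(p)}$ of $G_1$ incident to $s_i$, with exactly one endpoint of the $k$-th source/sink pair of some copy $G_2^{(j)}$, where $k$ is the second color-coordinate of the $B'$-edge $(u^{(a)}, v^{(j)})$. To rule out accidental collapses, first suppose two source/sink vertices of a common copy $G_1^{(a)}$ are identified. By property~(5) of $B'$, all $B'$-edges incident to $u^{(a)}$ share the same second coordinate $k$, so distinct edges of $G_1^{(a)}$ are replaced by the $k$-th source/sink pair of \emph{distinct} copies of $G_2$; hence the two vertices must be the two endpoints of a single edge of $G_1$, and then they are identified with $s_k, t_k$ of a common copy of $G_2$, which are distinct since $G_2$ is in standard form. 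Next, suppose sources/sinks from $G_1^{(a_1)}$ and $G_1^{(a_2)}$ with $a_1 \neq a_2$ collapse; the common identified vertex must lie in some copy $G_2^{(j)}$ whose $v^{(j)}$ is adjacent in $B'$ to both $u^{(a_1)}$ and $u^{(a_2)}$. Property~(2) forces the $B'$-edges $(u^{(a_1)}, v^{(j)})$ and $(u^{(a_2)}, v^{(j)})$ to carry distinct second coordinates, so the two candidate identified vertices inside $G_2^{(j)}$ are endpoints of distinct source/sink pairs, which are distinct vertices by the standard form of $G_2$. Neither collapse can occur, establishing (ii) and the lemma.
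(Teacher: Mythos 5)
Your proof is correct and takes essentially the same approach as the paper's: the key point in both is that each source/sink vertex of a copy of $G_2$ is merged with exactly one vertex of a copy of $G_1$, so distinct sources and sinks stay distinct after the identifications. You are somewhat more explicit than the paper, separately ruling out within-copy and cross-copy collisions (the paper only spells out the cross-copy case) and re-deriving positivity of the dual weights rather than pointing to the proof of Lemma~\ref{DualG}, but the underlying argument is the same.
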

\begin{proof} We defer the proof of this lemma to Section~\ref{sec:gaproofs} of the Appendix. 
\end{proof}
\subsubsection{Iterative Tensoring}\label{subsec:iteration}

In the next two statements size refers to the number of vertices in the graph $A_i$. The calculation of the size involves calculating the required girth at each iteration and the size of the colored bipartite graph used to tensor at each iteration. 

\begin{theorem}
\label{thm:iterative}
Given a graph $A=(V,E)$ with gap $(1+\varepsilon)$, we can construct a sequence of graphs $A_i=(V_i,E_i)$ with gap at least $(1+\frac{\varepsilon}{2})^{2^i}$, size at most $(3c_m)^{(4c_1)^{2^{i+1}}}$ where $c_m$ and $c_1$ are absolute constants. 
\end{theorem}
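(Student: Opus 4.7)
The plan is to apply the tensor construction of Theorem~\ref{gap} iteratively, starting from a suitably normalized version of $A$ and tensoring each graph with itself. First I would use Lemma~\ref{standard} to remove zero weight edges from a dual solution of $A$, and then apply the $\alpha$-Extension of Definition~\ref{distinct} with $\alpha$ small enough (say $\alpha=\varepsilon/3$) to force all sources and sinks onto distinct vertices while retaining gap at least $1+\varepsilon/2$. Call the resulting graph $A_0$ and its dual $D_0$; this puts us in standard form. Define recursively $A_{i+1}:=\text{Tensor}(A_i,A_i,D_i,D_i)$ and $D_{i+1}:=\text{Dual}(A_i,A_i,D_i,D_i)$. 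Lemma~\ref{still standard} ensures each $A_i$ is in standard form so the iteration is well defined, and Theorem~\ref{gap} gives $\mathrm{gap}(A_{i+1})\geq\mathrm{gap}(A_i)^2$, so a one-line induction yields the claimed gap lower bound $(1+\varepsilon/2)^{2^i}$.

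The main work is controlling $|V(A_i)|$. Let $v_i,m_i,k_i$ be the number of vertices, edges, and source-sink pairs of $A_i$, and let $l_i,w_i$ denote the maximum source-sink dual distance and the minimum dual edge weight in $D_i$. Inspecting the construction of $D_{i+1}$ in the proof of Lemma~\ref{DualG}, every new edge weight is a product of two weights in $D_i$ and the shortest source-sink path is a ``dotted'' product path, so
\begin{equation*}
w_{i+1}=w_i^2,\qquad l_{i+1}=l_i^2,\qquad \rho_{i+1}=\rho_i^2,
\end{equation*}
where $\rho_i:=l_i/w_i$. Hence $\rho_i=\rho_0^{2^i}$, and the girth requirement in Definition~\ref{tensor} becomes $2g_i$ with $g_i=\rho_i^2=\rho_0^{2^{i+1}}$.

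Now I plug into the colored bipartite graph bound. By Lemma~\ref{CBexistence}, the graph $B'$ used to form $A_{i+1}$ has at most $N_i:=(18\,m_i k_i)^{g_i+3}$ vertices on each side. The tensored graph places one copy of $A_i$ inside each vertex of $B'$, so I get the crude but sufficient estimates
\begin{equation*}
v_{i+1}\leq 2N_i v_i,\qquad m_{i+1}\leq N_i m_i,\qquad k_{i+1}\leq N_i k_i.
\end{equation*}
Using $m_i,k_i\leq v_i^2$ at every level and taking logarithms, the recurrence collapses to $\log v_{i+1}\leq c_1 \rho_0^{2^{i+1}}\log v_i$ for an absolute constant $c_1$. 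Iterating and absorbing $v_0$ and $\rho_0$ (which depend only on the diameter and parameters of the input $A$) into a constant $c_m$, one obtains $v_i\leq (3c_m)^{(4c_1)^{2^{i+1}}}$, which is the claimed bound.

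The main obstacle is the size analysis of Step~3: one must verify carefully that $l_i,w_i$ evolve exactly as products (so that $\rho_i$ squares at each step rather than blowing up faster), because this is what makes the required girth merely double-exponential and keeps the final size doubly exponential rather than triply exponential in $i$. The remaining bookkeeping---propagating $m_i,k_i\leq v_i^2$, bounding $v_{i+1}$ from the bipartite blowup, and solving the resulting double-exponential recurrence in the exact form of the theorem---is routine once the dual parameters are pinned down.
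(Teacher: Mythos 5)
Your proposal follows essentially the same route as the paper: normalize to standard form via Lemma~\ref{standard} and the $\alpha$-Extension, iterate $\text{Tensor}(\cdot,\cdot,\cdot,\cdot)$ on the graph with itself, track $w_i, l_i$ (and hence the girth) squaring at each step, and bound the bipartite-graph blowup via Lemma~\ref{CBexistence} to solve a doubly-exponential recurrence. One slip worth fixing: the crude recurrences $v_{i+1}\le 2N_iv_i$, $m_{i+1}\le N_im_i$ are not correct as written, because $B'$ is $(2m_i,k_i)$-bi-regular, so $A_{i+1}$ contains $n_{2}(B')=\tfrac{2m_i}{k_i}\,n_1(B')$ copies of $A_i'$, giving $v_{i+1}\lesssim \tfrac{m_i}{k_i}N_iv_i$ and $m_{i+1}\approx \tfrac{m_i^2}{k_i}N_i$; the extra $m_i/k_i$ factor itself grows doubly-exponentially. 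This does not sink the argument — once you substitute $m_i,k_i\le v_i^2$ and $k_i\ge 1$ the correct recurrence still collapses to $\log v_{i+1}\le c'\,g_i\log v_i$ — but the claim should be stated with the correct multiplicative factor (as the paper does, bounding $m_i$ first and then using $v_i\le m_i$). Also, you want inequalities $l_{i+1}\le l_i^2$, $\rho_{i+1}\le\rho_i^2$ rather than equalities, and $\alpha=\varepsilon/3$ only guarantees gap $\ge 1+\varepsilon/2$ when $\varepsilon\le 1$ (the paper picks $1+\alpha=\tfrac{1+\varepsilon}{1+\varepsilon/2}$ to make it hold unconditionally).
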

\begin{proof} We defer the proof to Section~\ref{sec:iterative} of the Appendix. Let $\alpha=\frac{1+\varepsilon}{1+\varepsilon/2}-1$. The proof first considers the $\alpha$-Extension of $A$ to start the recursion with a graph in standard form, then recursively defines pairs of tensored graphs and duals $(A_i, D_i)$ such that the gap increases geometrically. 
\end{proof}


\begin{proof} \textbf{of Theorem \ref{thm:Main1}}: Now, we calculate an expression for the gap in terms of size. $\frac{\log(\text{gap})}{\log(1+\varepsilon/2)}\ge 2^i\ge\frac{ \log\log(\text{size})-\log\log{3c_m}}{\log(16c_1^2)}$. Thus, we get a sequence of graphs with gap at least $\Omega((\log(\text{size}))^{c_2})$ where $c_2$ is an absolute positive constant less than 1 equal to $\frac{\log(1+\varepsilon/2)}{\log(16c_1^2)}$.
\end{proof}

\section{Limits of the Construction} \label{sec:sparsity}

\subsection{Sparsity Squares} 

In this subsection, we show that the construction we present can not be used ``as is'' to prove the Li and Li conjecture. The requirement for the underlying bipartite graph to have a high girth seems to heavily contribute to the size of the graph in the next iteration. Can we do better in terms of size to yield a gap of $\omega(\log |G|)$ by choosing a smaller bipartite graph at every iteration while still having a clever upper bound on the multicommodity flow in the new graph? The answer is no. Theorem~\ref{sparsitysquares} states that for every colored bipartite graph $B$, the tensor of $G_1$ and $G_2$ with $B$ as basis has sparsity of at least the product of the sparsities of $G_1$ and $G_2$ when the demands are all 1 in all the graphs. With the appropriate demands, 
this means that the sparsity grows exactly like the coding rate as in Lemma~\ref{NCCal}. Thus, for any iterative tensoring procedure that starts with a graph $G$ with NC/MCF gap and repeatedly tensors the graph at the $i$th iteration ($G_i$) with itself or with $G$ based on a colored bipartite graph $B_i$ will end up with a graph $G'$ with $\omega(\log |G'|)$ gap. Hence we can start with a graph $H$ with a gap between the flow rate and the sparsity and apply this procedure to get a graph $H'$ with $\omega(\log |H'|)$ gap between the flow rate and the sparsity, contradicting the bounds from~\cite{LeightonRao:MCF}. This means that through iterative tensoring, if we were able to prove the conjecture, we would also prove the statement that there exists no graphs with sparsity-multicommodity flow rate gap which is clearly false. 

\begin{theorem}\label{sparsitysquares}
For any $G_1, G_2, B$ for which $G=T(G_1',G_2',B)$ is defined ($G_1'$ and $G_2'$ are directed graphs obtained from $G_1$ and $G_2$ by directing each edge arbitrary in two directions such that new capacities add up to the previous),  
$$\text{Sparsity}(G) \geq \text{Sparsity}(G_1) \cdot \text{Sparsity} (G_2).$$ when the demands of $G_1$, $G_2$ and $G$ are all scaled to 1.  
\end{theorem}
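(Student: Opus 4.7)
My plan is to prove the inequality cut-by-cut: I fix an arbitrary cut $(U, V(G)\setminus U)$ of $G$ and show that
$$\text{cap}_G(U) \geq \text{Sparsity}(G_1)\cdot \text{Sparsity}(G_2)\cdot \text{dem}_G(U).$$
For each of the $n_2$ copies of $G_2$ used in the construction let $U_j := U \cap V(G_2^{(j)})$, and for each of the $n_1$ copies of $G_1$ let $W_i := U \cap V(G_1^{(i)})$. Every edge of $G$ lies inside a unique copy of $G_2$, and its capacity has been scaled by $c_1(p(j))$, where $p(j)$ is the edge of $G_1$ that the $j$th copy of $G_2$ replaces (uniqueness is Property 4 of the colored bipartite graph $B$). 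The source-sink pairs of $G$ are exactly the source-sink pairs of each copy of $G_1$. Since the two directed capacities introduced in forming $G_1', G_2'$ sum back to the original undirected capacities, collecting terms yields
$$\text{cap}_G(U) = \sum_{j=1}^{n_2} c_1(p(j))\cdot \text{cap}_{G_2}(U_j), \qquad \text{dem}_G(U) = \sum_{i=1}^{n_1} \text{dem}_{G_1}(W_i).$$

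I then apply the sparsity bound inside each copy of $G_2$: $\text{cap}_{G_2}(U_j) \geq \text{Sparsity}(G_2)\cdot \text{dem}_{G_2}(U_j)$. The key combinatorial identity to establish is
$$\sum_j c_1(p(j))\cdot \text{dem}_{G_2}(U_j) \;=\; \sum_i \text{cap}_{G_1}(W_i).$$
I would prove this using the bijection between the edges of $B$ and the $4$-tuples $(i, p, j, k)$ encoding that the $k$th source-sink pair of the $j$th copy of $G_2$ is merged with the endpoints of the $p$th edge of the $i$th copy of $G_1$ (the index $p$ is fixed by $j$ and the index $k$ is fixed by $i$, by Properties 4 and 5 respectively). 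Since the merged vertices are literally equal in $G$, this pair is separated by $U_j$ if and only if the corresponding edge is separated by $W_i$. Both sides of the identity therefore count $c_1(p)$ over exactly the same set of separated (copy, edge) pairs in $G_1$.

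Finally, applying the sparsity of $G_1$ to each $W_i$ and chaining the bounds yields $\text{cap}_G(U) \geq \text{Sparsity}(G_1)\cdot \text{Sparsity}(G_2) \cdot \text{dem}_G(U)$, and since $U$ was arbitrary the theorem follows. The main (and essentially only nontrivial) obstacle is verifying the combinatorial identity in the middle step, which rests on Properties 2--5 of the colored bipartite graph to match separated source-sink pairs on the $G_2$ side with separated edges on the $G_1$ side; everything else is just unfolding definitions. The passage from undirected $G_1, G_2$ to their directed versions $G_1', G_2'$ causes no complication, since the final graph $G$ is undirected and the pair of directed capacities of each edge sums back to the original undirected capacity.
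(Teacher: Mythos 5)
Your proof is correct and follows the same two-level decomposition as the paper's own argument: split the cut capacity over copies of $G_2$, apply $\text{Sparsity}(G_2)$ in each copy, regroup the resulting sum over copies of $G_1$, and then apply $\text{Sparsity}(G_1)$. The noteworthy difference is the bookkeeping. You carry the vertex partition $U$ together with its projections $U_j = U \cap V(G_2^{(j)})$ and $W_i = U \cap V(G_1^{(i)})$ throughout, so the regrouping in the middle is an exact identity and $\text{dem}_G(U) = \sum_i \text{dem}_{G_1}(W_i)$ also holds with equality. The paper instead tracks the cut-\emph{edge} set $H$ and its ``marked'' image $F_j$ inside each copy of $G_1$, which only yields the inequality $\sum_j n_j^{(1)} \ge n$ (an edge set disconnecting a pair within $G_1^{(j)}$ need not come from a vertex partition separating that pair in $G$) and implicitly divides by the counts $n_i^{(2)}, n_j^{(1)}$, which can be zero. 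Your vertex-partition formulation sidesteps both of these small awkwardnesses. One point worth making explicit in a full write-up: $c_1(p(j))$ should be the capacity of the \emph{directed} edge of $G_1'$ that the $j$th copy of $G_2'$ replaces; the identity $\sum_j c_1(p(j))\,\text{dem}_{G_2}(U_j) = \sum_i \text{cap}_{G_1}(W_i)$ then holds because the two directed halves of an undirected $G_1$ edge are cut by $W_i$ simultaneously (their shared endpoints are merged with the corresponding $G_2$ sources/sinks) and their directed capacities sum to the undirected one, exactly as you note.
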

\begin{proof}
We defer the proof to Section~\ref{sec:sparsityproof} in the Appendix. 
\end{proof}


\bibliographystyle{ACM-Reference-Format-Journals}
\bibliography{citations}
\appendix
\appendix
\section{Definition of Network Coding}
\label{sec:NC}

Let $M(i)$ be the set of all messages $s_i$ wants to send, and let $M = \Pi_i M(i)$. For every $v \in V$, let $\text{In}(v) \subseteq E$ denote the set of edges incident to $e$. 

\begin{definition}
A \emph{network coding solution} for a graph $G$ specifies for each edge directed $e \in E$ an alphabet $\Gamma(e)$ and a function $f_e: M \rightarrow \Gamma(e)$ specifying the symbol 
transmitted on edge $e$. This must satisfy the following two conditions: 	
\begin{itemize}
	\item Correctness: each sink node receives the message from its corresponding source, i.e. $f_{T(i)} = f_{S(i)}$. 
	\item Causality: every message transmitted on edge $e$ is computable from information received at its tail vertex at a time prior to the message's transmission. 
\end{itemize}
\end{definition}

\begin{definition} 
A \emph{causal computation} of a network consists of 
\begin{itemize}
	\item A sequence of edges $e_1, ... , e_T$ where each edge can appear multiple times. 
	\item A sequence of alphabets $\Lambda_1, ... , \Lambda_T$. 
	\item A sequence of coding functions $\rho_1, ... , \rho_T$, which in turn satisfy
	\begin{enumerate}
		\item For each function $\rho_t$ such that $e_t = (u,v)$ is not a source edge, the value of $\rho_t$ is uniquely determined by the values of the functions in the set $\{\rho_x: x < t, e_t \in \text{In}(u) \}$. 
		\item For each edge $e$, the Cartesian product of the alphabets in the set $\{ \Lambda_i : e_i = e\}$ is equal to $\Gamma(e)$. 
		\item For each edge $e$, the set of coding functions $\{\rho_i: e_i = e\}$ together define the coding function $f_e$ specified by the network coding solution. 
	\end{enumerate}
\end{itemize}
\end{definition}

At this point we are equipped with the tools needed to define the network coding rate, the information-theoretic equivalent of the flow rate. 

\begin{definition}
A \emph{network coding solution} for a graph $G$ achieves a rate $r$ if there exists a constant $b \geq 0$ such that 
\begin{itemize}
	\item $H(S(i)) \geq r \cdot d_i \cdot b$ for each commodity $i$ 
	\item for each edge $e \in E$, $H(\vec e)+H(\cev e) \leq c(e) \cdot b$, 
\end{itemize}  
where by $H(\vec e)$ we denote the entropy of edge $\vec e$. The coding rate is defined to be the supremum of the rates of all network coding solutions. 
\end{definition}

\section{Multicommodity Flows} 
\label{sec:MCFs} 

The standard LP formulation for concurrent multicommodity flow problems is written below. It has a variable for every path $p \in P_i$, where $P_i$ is the set of all paths between $s_i$ and $t_i$. We want to find the largest rate $\lambda$ that can be concurrently sent between all source-sink pairs subject to the path variables being non-negative and not exceeding the capacity of any edge over all commodities. 

\begin{equation}
\label{eq:LP2}
\begin{aligned}
\text{maximize} & & \lambda & \\
\text{subject to} & & \sum_{p \in P_{i}} f(p) \geq \lambda d_i & & \forall i \in [k] \\ 
& & \sum_{p:e \in p} f(p) \leq c(e) & & \forall e \in E \\
& & f(p) \geq 0 & & \forall p \\
& & \lambda \geq 0 & & \\
\end{aligned}
\end{equation} 

\begin{proof} \textbf{of Lemma~\ref{standard}}: 
We contract all the edges with zero weight in the dual.
We need to show that the gap does not decrease. Removing a zero dual variable from a multicommodity solution cannot improve the flow rate, since the distances and the dual objective remains the same. We can use the same coding solution for the new graph with the exception that we now compose the encoding on the edges that were contracted. This shows that the flow rate does not increase and the coding rate does not decrease, proving that their ratio does not decrease.
\end{proof}

\section{Standard Form}
\label{sec:sf}
This section gives the detailed description of $G_\alpha$ and $D_\alpha(G)$. 
Let $k_v$ be the number of sources and sinks at vertex $v\in V(G)$. In the graph $G_\alpha$, add $k_v$ edges (leaves) at $v$ with capacity $z(D)d(1+\varepsilon)$ and shift all the sources or sinks at $v$ to the unique endpoints of these leaves. As each source sends $\ge z(D)d(1+\varepsilon)$ amount of information in an optimal network coding solution and can still send $ z(D)d(1+\varepsilon)$, the network coding rate doesn't decrease below $z(D)(1+\varepsilon)$. We construct $D_\alpha(G)$ as follows:
\begin{enumerate}
\item For each edge originally in $G$, assign the same weights as in $D$.
\item Give weight $\frac{\alpha}{kd(1+\varepsilon)}$ to the new edges.
\end{enumerate}
Distances in the dual don't decrease, so $D_\alpha(G)$ is a valid solution. Since we added $k$ new edges, $z(D_\alpha(G))=kz(D)d(1+\varepsilon)\frac{\alpha}{kd(1+\varepsilon)}+z(D)=z(D)(1+\alpha)$. Thus, $\frac{NC_G}{z(D\alpha(G))} \geq \frac{1+\varepsilon}{1+\alpha}$.

\section{Colored Bipartite Graph Construction}
\label{sec:cbg}

\begin{figure}
\begin{minipage}[t]{0.475\textwidth}
\centering
\includegraphics[width=0.35\linewidth]{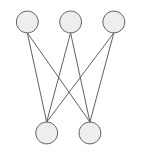}
\caption{(2,3) Bi-regular bipartite graph with girth 4.}
\label{fig:toy}
\end{minipage}
\hfill
\begin{minipage}[t]{0.475\textwidth}
\centering
\includegraphics[width=0.6\linewidth]{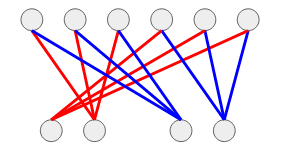}
\caption{Intermediate graph when one set of colors have been assigned.}
\label{fig:middle}
\end{minipage}
\end{figure}

\begin{figure}
\centering
\includegraphics[width=0.8\linewidth]{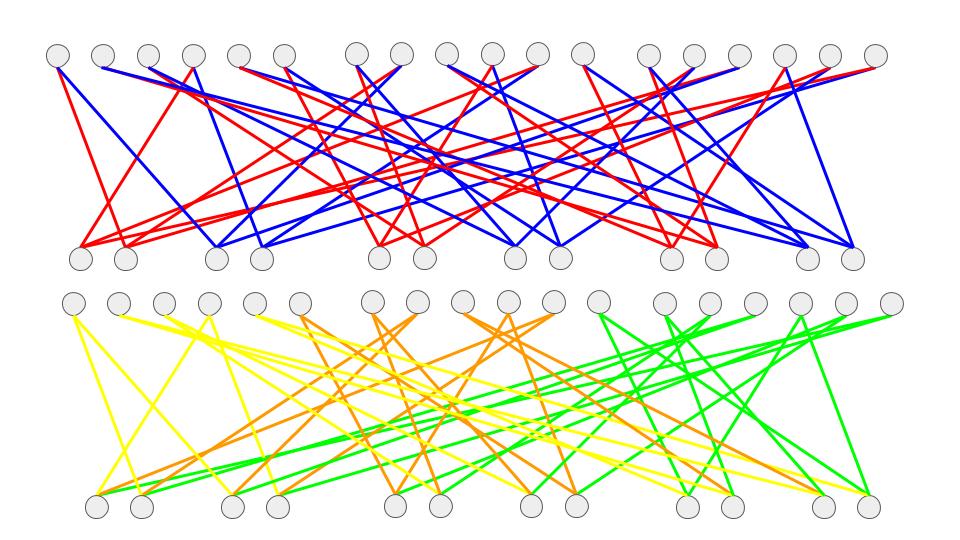}
\includegraphics[width=0.8\linewidth]{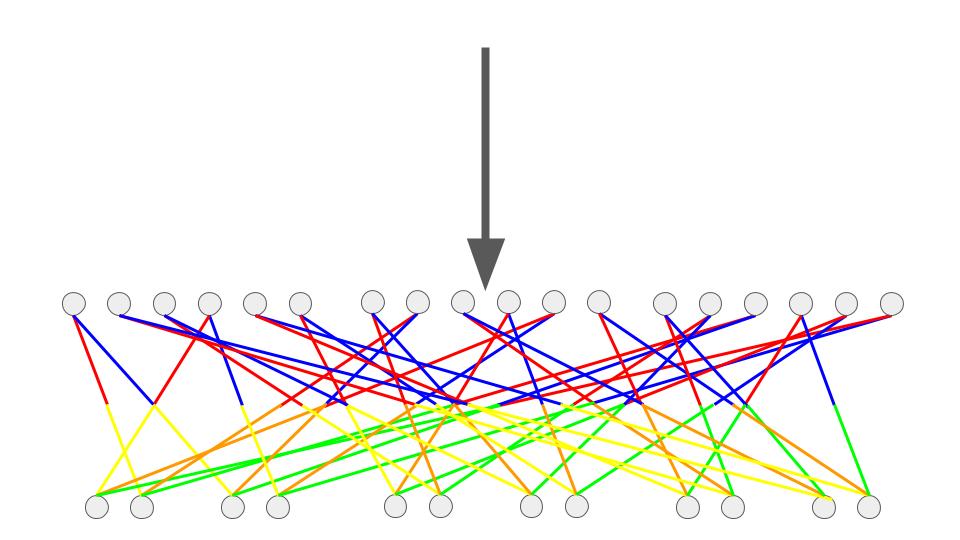}
\caption{Final colored bipartite graph with girth 4.}
\label{fig:final}
\end{figure}
In this section, we give a construction for a colored bipartite graph $C_B$ in $\mathcal{B}_{n_1,n_2,r,s,2g,r,s}$ $\forall r,s,g\ge 3$ with $n_1,n_2\le rs(9rs)^{g+2}$. We start with  $(r,s)$ biregular bipartite graph with girth at least $2g$.

\begin{lemma}\label{high girth} ~\cite{furedi1995} $\forall r,s, g\ge 3$, there exists a $(r,s)$ bi-regular bipartite graph with girth at least $2g$ and having at most $n=(9rs)^{g+2}$ vertices.
\end{lemma}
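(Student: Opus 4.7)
The plan is a probabilistic argument with deletion, set up so that the unavoidable loss from killing short cycles costs only a polynomial factor in $rs$. Fix a parameter $N$ to be chosen later, and take the complete bipartite graph $K_{sN,\,rN}$ (so the two sides have sizes $n_1=sN$, $n_2=rN$ and the necessary degree condition $n_1 r = n_2 s$ is satisfied). Include each potential edge independently with probability $p = 1/N$; then the expected left-degree is exactly $r$ and the expected right-degree is exactly $s$. The number of cycles of length $2i$ in $K_{n_1,n_2}$ is at most $n_1^i n_2^i/(4i)$, each of which is present in the random subgraph with probability $p^{2i}$, so the expected number of cycles of length $2i$ is at most $(n_1 p)^i(n_2 p)^i/(4i) = (rs)^i/(4i)$. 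Summing over $2 \le i \le g-1$ gives an expected short-cycle count of at most $C(rs)^{g-1}$ for an absolute constant $C$, independent of $N$.

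By Markov's inequality, with probability at least $1/2$ the graph has at most $2C(rs)^{g-1}$ short cycles, and by Chernoff (taking $N$ polynomial in $rs$) the degrees simultaneously concentrate tightly around $r$ and $s$. Fix such an outcome and delete one edge from each short cycle; this destroys $O((rs)^{g-1})$ edges and yields a bipartite graph of girth at least $2g$ whose degrees are still very close to $r$ and $s$.

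The main obstacle will be restoring exact $(r,s)$-biregularity without reintroducing short cycles. My plan is to choose $N$ of order $(9rs)^{g+1}$, large enough that the number of remaining degree defects is dwarfed by the number of ``safe'' edge swaps — pairs of existing edges that can be rewired to correct a defect without closing a cycle of length $<2g$. A direct counting argument should show that each swap is forbidden by at most $\mathrm{poly}(rs)$ would-be short cycles while the number of candidate swaps is $\Omega(N)$, so greedy correction succeeds; if the bookkeeping becomes delicate, an application of the Lovász Local Lemma on the joint event ``all defects corrected \emph{and} no short cycle created'' should finish the argument. The final size is $n_1 + n_2 = (r+s)N \le (9rs)^{g+2}$ because $r+s \le 9rs$ for $r,s \ge 2$. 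An alternative (and the route actually taken in the cited work) is the explicit algebraic construction of Lazebnik--Ustimenko type, where both vertex classes are copies of $\mathbb{F}_q^{k}$ for $q = \Theta(\max(r,s))$ and incidence is given by a carefully designed system of polynomial equations; the girth lower bound then reduces to showing that short cycles would force nontrivial polynomial identities over $\mathbb{F}_q$, and biregularity of the right degrees is obtained by restricting to appropriate cosets.
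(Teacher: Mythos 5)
The paper does not actually prove this lemma; it is imported verbatim as Theorem~E of the cited work \cite{furedi1995}, so there is no internal proof to compare against. Your proposal, by contrast, attempts a self-contained probabilistic argument, and the gap lies in the degree-concentration step.

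Concretely, after sampling edges of $K_{sN,rN}$ independently with probability $p=1/N$, each left vertex has degree distributed as $\mathrm{Bin}(rN,1/N)$, which converges to $\mathrm{Poisson}(r)$ as $N\to\infty$. The probability that a fixed vertex has degree exactly $r$ is therefore bounded away from $1$ by a constant depending only on $r$, not on $N$, and the probability it deviates by, say, $\pm 1$ is likewise $\Theta(1)$. Chernoff cannot rescue this: the tail bound $\Pr[\,|\deg(v)-r|>\delta r\,]\le 2e^{-\Omega(\delta^2 r)}$ is independent of $N$, so enlarging $N$ does not tighten per-vertex concentration. Consequently the expected number of degree defects is $\Theta(N)$, not $o(N)$, and in particular is \emph{not} dwarfed by the pool of safe swaps (also $O(N)$); the premise of your greedy/LLL repair step fails. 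This is qualitatively different from the short-cycle count, which really is $O((rs)^{g-1})$ independent of $N$ and hence negligible. To make a probabilistic route work you would need to sample from a model that enforces biregularity a priori — e.g.\ the bipartite configuration model with fixed degrees $(r,s)$ — and then argue the expected short-cycle count is small there; this changes the calculation (cycle probabilities are no longer simply $p^{2i}$) and the ``delete a few edges, then repair'' step becomes a more serious local-surgery argument. Your last paragraph gesturing at Lazebnik--Ustimenko-style algebraic incidence graphs is closer in spirit to what \cite{furedi1995} actually does, but you state it only as a remark and give no details, so it cannot stand in for a proof.
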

This lemma follows from Theorem E in ~\cite{furedi1995}.

\begin{proof}\textbf{of Lemma~\ref{CBexistence}.}:
Let $B(r,s,g)$ be a graph satisfying the above property. 
For simplicity, denote $B(r,s,g)$ by just $B=(V_1,V_2,E)$.
Denote the coloring for every edge $e$ by $(a_e,b_e)$. First we construct an intermediate graph $H$ in $\mathcal{B}_{n_1',n_2',r,s,2g,1,s}$ ($n_1'=s|V_1|, n_2'=s|V_2|$) as follows:
\begin{enumerate}
\item Enumerate all the edges incident to a vertex $v\in V_2$ as $e^{(1)}_v,...,e^{(s)}_v$. 
\item Add $s$ copies of $V$ to graph $H$. Enumerate these copies as $(V_1^{(1)},V_2^{(1)}),...,(V_1^{(s)},V_2^{(s)})$.
\item $\forall v^{j}\in V_2^{(j)}, j\in[s]$, $\forall i\in[s]$, corresponding to edge $e^{(i)}_v=(u,v)\in E(B)$, add an edge $e$ from $v^{j}$ to $u^{[(j+i-2 \text{mod } s)+1]}\in V_1^{[(j+i-2) \text{mod } s]+1}$ (copy of $u$ in $[(j+i-2 ) \text{mod } s]+1$-th copy of $V_1$). Set $b_e=[(j+i-2) \text{mod } s]+1$. Therefore, $\forall u^{j}\in V_1^{(j)}$, edges $e'$ incident at $u^j$ have $b_{e'}=j$ (same color).
\end{enumerate}
For a vertex $v\in V_2^{(j)}$, the edge corresponding to $e^{(i)}_v$ comes from a vertex in $V_1^{((j+i-2 \text{mod } s)+1}$. Thus, all edges incident to $v$ have distinct colors. 

We still need to show that the girth of $H$ is at least $2g$. For this, we show that a cycle $C$ of length $c$ in $H$ implies a cycle of length $\leq c$ in $B$. As all the edges incident to a vertex in $H$ correspond to different edges in $B$, when we project back $C$ to a cycle $C'$ in $B$, no two consecutive edges in $C'$ are the same implying $C'$ has no cycle of length 2. Thus, $C'$ must have a cycle of length $3\leq c'\leq c$. $B$ has girth at least $2g$, so the girth of $H$ cannot be smaller.
 
Now, we repeat the process for $H=(H_1,H_2)$ to get graph $C_B$ with $H_1$ playing the role of $V_2$ and $H_2$ playing the role of $V_1$ in the above algorithm. This time we assign $a_e\in[r]$ and make $r$ copies of $H$. We can see that as was the case for $b_e$, each vertex in a copy of $H_1$ gets $r$ distinct $a_e$ values and each vertex in a copy of $H_2$ gets the same $a_e$ depending on which copy it belongs to. The girth doesn't decrease on going from $H$ to $C_B$ giving us the result we claim. 
\end{proof}

An example of a colored bipartite graph in $\mathcal{B}_{12,18,3,2,4,3,2}$ is given in Figure~\ref{fig:final}. We start with $K_{2,3}$ as in Figure~\ref{fig:toy} with girth 4. Then, we construct the intermediate graph in $\mathcal{B}_{4,6,3,2,4,1,2}$ as shown in Figure~\ref{fig:middle}. The color of the edge depends on the copy it is incident to on the lower side. For a vertex on the upper side, we send edges to correct vertices in distinct copies cyclically.

\section{Gap Amplification Proofs}
\label{sec:gaproofs}
\begin{proof} \textbf{of Lemma~\ref{NCCal}}: Graph $G_1$ has a network coding rate of at least $r_1(1+\varepsilon_1)$ and hence each source sends $r_1(1+\varepsilon_1)d_1$ amount of information to its corresponding sink, and similarly for $G_2$. This is true even for the directed graphs $G_1'$ and $G_2'$ by definition. 
While constructing $T(G_1',G_2',B')$, we aligned the source-sink pair in the same direction as the directed edge. This allows us to compose the network coding solutions ($N_2$ over $N_1$) to get the information sent from each source in $G$ to be at least $r_1r_2(1+\varepsilon_1)(1+\varepsilon_2)d_1d_2$. This is due to the fact that as we are replacing edges in $G_1'$ by a source-sink pair of a copy of $G_2'$, the effective capacity seen by the replaced edge $(e)$ with capacity $c_{1e}$ is now $c_{1e}r_2(1+\varepsilon_2)d_2$. Thus, the coding rate for graph $G$ is at least $\frac{r_1 r_2 (1+\varepsilon_1) (1+\varepsilon_2)d_1 d_2}{\text{(demand in graph $G$)}}=\frac{r_1 r_2 (1+\varepsilon_1) (1+\varepsilon_2)d_1 d_2q}{d_1d_2}=r_1r_2(1+\varepsilon_1)(1+\varepsilon_2)q$.
\end{proof}

\begin{proof} \textbf{of Lemma~\ref{DualG}}: Here, we prove that $D$ is indeed a valid dual solution.
$B'$ has $n_1=k_2q$ nodes on the left side. Let $l_1(s_i,t_i)$ denote the shortest distance between $i$-th 
source-sink pair with respect to dual $D_1$. Let $l_2(s_i,t_i)$ denote the shortest distance between $i$-th 
source-sink pair with respect to dual $D_2$.

Let $G_{1}'^u$ and $G_{2}'^u$ be the undirected version of the graphs $G_{1}'$ and $G_2'$ respectively. $G_{1}'^u$ and $G_{2}'^u$ are graphs $G_1$ and $G_2$ where each edge is divided into 2 edges with capacities adding up to the previous one. Construct dual solutions $D_1'$ and $D_2'$ for $G_{1}'^u$ and $G_{2}'^u$ such that each divided edge still gets the same weight as in dual solutions $D_1$ and $D_2$. The distances between source-sink pairs remain the same. In $G$, calculate the shortest distance i.e. $l(s_i^{(y)},t_i^{(y)})$ between source-sink pair $(s_i^{(y)},t_i^{(y)})$ which corresponds to the $i$-th source-sink pair $(s_i,t_i) $ in the $y$-th copy of $G_1'^u$ (finally, we make the graph undirected).  In this copy of $G_1'^u$, assume that we replaced each edge with the $j_y$-th source-sink pair of $G_{2}'^u$ (this is unique due to Property (2) in Definition~\ref{coloredtensor}). Therefore, according to $D$, $l(s_i^{(y)},t_i^{(y)})\le l_1(s_i,t_i)l_{2}(s_{j_y},t_{j_y})$ (these correspond to dotted paths).
Any other path from $s_i^{(y)}$ to $t_i^{(y)}$ involves traversing to another copy of $G_1'^u$ through a copy of $G_{2}'^u$ that replaced edges in this copy of $G_1'^u$. This transition from a copy of $G_1'^u$  to another copy of $G_1'^u$ in $G$ corresponds to two consecutive edges in the bipartite graph $B'$. Any such path in $G$ having no loops would thus have to make at least $g$ of these transitions to revert back to the original copy of $G_1'^u$ containing the source. Here, the girth of graph $B'$ is at least $2g$. Each transition involves crossing at least one edge (in a copy of $G_2'$) with weight at least $w_1w_2$ in $D$ because $G_2$, being in standard form, has all sources and sinks lying on distinct vertices and vertices of a copy of $G_1'$ connect only to the vertices of a copy of $G_2'$ carrying a unique source or sink. Thus, such a path would have distance at least $gw_1w_2=\frac{l_2l_1}{w_1w_2}w_1w_2\ge l_1l_2$ using $l_1,l_2,w_1,w_2$ from Definition~\ref{tensor}. 
The cheating paths have distance at least $l_1l_2$ implying $l(s_i^{(y)},t_i^{(y)})=l_1(s_i,t_i)l_{2}(s_{j_y},t_{j_y})$. The left hand side of the distance constraint in 
LP~\ref{eq:LP3} becomes $\sum_{i=1,y=1}^{k_1,n_1}\frac{d_1d_2}{q}l(s_i^{(y)},t_i^{(y)})$, where the first expression in the summand is the demand of source-sink pairs in $G$. 

\[\sum_{i=1,y=1}^{k_1,n_1}\frac{d_1d_2}{q}l(s_i^{(y)},t_i^{(y)})= \sum_{i=1,y=1}^{k_1,n_1}\frac{d_1d_2}{q}l_1(s_i,t_i)l_{2}(s_{j_y},t_{j_y})= \frac{d_1d_2}{q}\sum_{y=1}^{n_1}l_{2}(s_{j_y},t_{j_y})\sum_{i=1}^{k_1}l_1(s_i,t_i)\]\[=\frac{1}{q}\cdot \frac{n_1}{k_2}\left(\sum_{j=1}^{k_2}d_2l_{2}(s_j,t_j)\right)\left(\sum_{i=1}^{k_1}d_1l_1(s_i,t_i)\right)\ge \frac{n_1}{qk_2}=1 \]

The second to last equality follows from the fact that there are total $n_1$ copies of $G_1'$, $j_y$ is fixed for fixed $y$-th copy of $G_1'$ and each $l_{2}(s_j,t_j)$ ($j\in[k_2]$) is thus counted $\frac{n_1}{k_2}$ time. The last inequality follows from $D_1'$ and $D_2'$ being valid dual solutions of $G_1'^u$ and $G_2'^u$ respectively (distance constraints).

The value of $z(D_1')$ for graph $G_1'^u$ is $r_1$. $D_1'$ assigns the same dual weights as that of $D_1$ for the divided edges and is a valid dual solution for $G_1'$, and similarly for $D_2'$. We can see from the construction of $D$ and the edge capacities that $z(D)=\frac{n_1}{k_2}z(D_1')z(D_2')=qr_1r_2$. $D$ is a function of $G_1$, $G_2$, $D_1$, $D_2$. 
\end{proof}

\begin{proof} \textbf{of Lemma~\ref{still standard}}: Demands are equal for all source-sink pairs in $G=$Tensor($G_1,G_2,D_1,D_2$) by definition. We need to prove that all sources and sinks in $G$ still lie on distinct vertices. We don't add any new source-sink pairs and thus, each source-sink pair lies on distinct vertices on a copy of $G_1'$. While constructing $T(G_1',G_2',B')$, we merge a vertex $v$ in a copy of $G_1'$ with a source or a sink vertex of a copy of $G_2'$ and since each vertex contains a unique source or sink of $G_2'$, no two vertices from different copies of $G_1'$ are merged together. This implies that all sources and sinks still lie on distinct vertices of $G$.
\end{proof}

\section{Proof of Theorem~\ref{thm:iterative}}
\label{sec:iterative}
\begin{proof} Using Lemma~\ref{standard}, we can assume that graph $A$ has an optimal dual solution $D$ with all dual variables being non-zero. It is without loss of generality that $A$ has equal demands for all source-sink pairs. Define $A^*$ to be the $\alpha$-Extension of $A$ given $D$ and $D^*=D_\alpha(A)$ ($1+\alpha=\frac{1+\varepsilon}{1+\varepsilon/2}$). Let $A^*$ have $c_n$ vertices, $c_m$ edges and $c_k$ source-sink pairs having $c_d$ demand each. Without loss of generality we can assume that $c_m\ge c_k, c_n$ as otherwise we can just divide some edges into multiple edges with reduced capacities. Let $l$ be the largest distance between any source-sink pair in the dual $D^*$ and $w>0$ be the minimum weight of an edge in dual $D^*$. We also know that $\frac{NC_{A^*}}{z(D^*)}\ge \frac{1+\varepsilon}{1+\alpha}= 1+\frac{\varepsilon}{2}$. As the objective value of any dual solution is at least the flow rate, we get that $A^*$ has a gap of at least $(1+\frac{\varepsilon}{2})$. $A^*$ is in standard form. $A_i$ is defined iteratively as follows:
\\
\fbox{\begin{minipage}{20em}
\begin{description} 
\item $A_0=A^*, D_0=D^*,\varepsilon_0=\frac{\varepsilon}{2}$.
\item For $i\ge1$:
\item $\varepsilon_i$ is such that $(1+\varepsilon_i)=(1+\varepsilon_{i-1})^{2}$.
\item $D_i=$Dual$(A_{i-1},A_{i-1},D_{i-1},D_{i-1})$.
\item $A_i=$Tensor$(A_{i-1},A_{i-1},D_{i-1},D_{i-1})$.
\end{description}
\end{minipage}}
\\
Note that $\forall i, A_i$ is in standard form using Lemma~\ref{still standard} and thus iterative tensoring is valid. 
Through Theorem~\ref{gap}, we know that if $\frac{NC_{A_{i-1}}}{z(D_{i-1})}\ge (1+\varepsilon_{i-1})$, then $\frac{NC_{A_{i}}}{z(D_{i})}\ge(1+\varepsilon_{i-1})^2=1+\varepsilon_i$. As $\frac{NC_{A^*}}{z(D^*)}=1+\frac{\varepsilon}{2}$, we get $\frac{NC_{A_{i}}}{z(D_{i})}\ge 1+\varepsilon_i=(1+\varepsilon/2)^{2^i}\forall i$ by induction. The objective value of any dual solution is at least the flow rate implying that the gap between coding and flow rate for $A_i$ is at least $(1+\frac{\varepsilon}{2})^{2^i}$. 
%

To see how the size of $A_i$ grows, we first calculate the required girth ($2g_i$) at each iteration. 
From the construction of $D_i= \text{Dual}(A_{i-1},A_{i-1},D_{i-1},D_{i-1})$ in the proof of Lemma~\ref{DualG} we see that $w_i=w_{i-1}^2, l_i\le l_{i-1}^2$. By induction, we have that for all $i$, $w_i= w^{2^i}$ and $l_{i}\le l^{2^i}$,  where $l$ and $w$ are as defined in Subsection~\ref{subsec:iteration} 
From Definition~\ref{tensor}, we have that $g_i=\frac{l_{i-1}^2}{w_{i-1}^2}\le \frac{(l^{2^{i-1}})^2}{(w^{2^{i-1}})^2}=(\frac{l}{w})^{2^i}$. Therefore, $g_i\le (\frac{l}{w})^{2^i}\forall i\ge 1$. Let $c=\frac{l}{w}\ge 1$. 

Now, we establish an upper bound on the size of the graph. Recall $A_i$ is the $T(A_{i-1}',A_{i-1}',B_i)$ where $B_i=C_{m_{i-1}'k_{i-1}g_i}$ and $m_{i-1}'=2m_{i-1}$. $A_{i-1}'$ is the directed graph constructed according to the optimal network coding solution of $A_{i-1}$. Let $n_{1i}=|V_i(B_i)|, n_{2i}=|V_2(B_i)|$. From Lemma~\ref{CBexistence}, $n_{1i}\le (9m_{i-1}k_{i-1})^{g_i+3}\le (9m_{i-1}k_{i-1})^{c^{2^i}+3}$. 

Note that $m_i=\frac{n_{1i}}{k_{i-1}}m_{A_{i-1}'}m_{A_{i-1}'}=\frac{n_{1i}}{k_{i-1}}(4m_{i-1}^2)$ and $k_i=n_{1i}k_{i-1}$. Each edge in $A_{i-1}'$ is replaced by a copy of $A_{i-1}'$ and each copy is counted $k_{i-1}$ times implying $v_i\le 2m_{i-1}v_{i-1}\frac{n_{1i}}{k_{i-1}}$.

Moreover, $\frac{m_i}{k_i}=4(\frac{m_{i-1}}{k_{i-1}})^2$. By induction, $ k_{i}\le m_{i}$ as $c_k\le c_m$. Likewise, we get that $\frac{m_i}{v_i}=2\frac{m_{i-1}}{v_{i-1}}\ge 1 \forall i$. \\
$m_i\le 4n_i(m_{i-1}^2)\le 4m_{i-1}^2(9m_{i-1}k_{i-1})^{g_i+3}\le (9m_{i-1}^2)^{g_i+4}=(3m_{i-1})^{2c^{2^i}+8}\le(3m_{i-1})^{2(c+1)^{2^i}+8}\le(3m_{i-1})^{4(c+1)^{2^i}} \forall i\ge 1$ ($c\ge 1$). 

Let $c_1=c+1$.
\begin{claim} $m_i\le (3c_m)^{(4c_1)^{2^{i+1}}}$.
\end{claim}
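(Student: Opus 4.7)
The plan is to prove the claim by a straightforward induction on $i$, using the recursion $m_i \leq (3m_{i-1})^{4c_1^{2^i}}$ (valid for $i \geq 1$) that was established in the paragraph immediately preceding the claim, together with the initial value $m_0 = c_m$. The inductive hypothesis at step $i-1$ will be $m_{i-1} \leq (3c_m)^{(4c_1)^{2^i}}$, and I will plug this into the recursion and simplify, exploiting the double-exponential growth of the exponents to absorb the extra constant factors.

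For the base case $i=0$, the bound $m_0 = c_m \leq (3c_m)^{(4c_1)^2}$ is immediate since $c_m \geq 1$ and $(4c_1)^2 \geq 1$. For the inductive step, assuming $m_{i-1} \leq (3c_m)^{(4c_1)^{2^i}}$, the recursion gives
\[
m_i \leq (3m_{i-1})^{4c_1^{2^i}} \leq \Bigl(3 \cdot (3c_m)^{(4c_1)^{2^i}}\Bigr)^{4c_1^{2^i}} \leq (3c_m)^{\bigl((4c_1)^{2^i} + 1\bigr)\cdot 4c_1^{2^i}},
\]
where the last step uses $3 \leq 3c_m$ (the graph $A$ is nontrivial, so $c_m \geq 1$). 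It therefore suffices to verify the exponent inequality
\[
\bigl((4c_1)^{2^i} + 1\bigr) \cdot 4c_1^{2^i} \leq (4c_1)^{2^{i+1}} = 4^{2^{i+1}} c_1^{2^{i+1}}.
\]
Bounding $(4c_1)^{2^i} + 1 \leq 2(4c_1)^{2^i}$, the left-hand side is at most $8 \cdot 4^{2^i} c_1^{2^{i+1}}$, so the inequality reduces to $8 \leq 4^{2^i}$, which holds for every $i \geq 1$ since $4^{2} = 16 \geq 8$.

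I expect no serious obstacle here: the only mild subtlety is that the exponent inequality needs $i \geq 1$, and the case $i = 0$ is therefore handled separately as the base case. The slack in the double exponent is generous, so minor adjustments to absorb constants (the ``$3$'' in $3m_{i-1}$ and the ``$+1$'' from bounding $(4c_1)^{2^i}+1$) go through easily, and the claimed bound $m_i \leq (3c_m)^{(4c_1)^{2^{i+1}}}$ follows by induction.
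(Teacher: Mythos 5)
Your proof is correct and follows essentially the same approach as the paper: induction on $i$ using the recursion $m_i \le (3m_{i-1})^{4c_1^{2^i}}$, with the base case checked directly and the inductive step reduced to an elementary inequality on the exponents. The only cosmetic difference is in the final arithmetic check (you bound $(4c_1)^{2^i}+1\le 2(4c_1)^{2^i}$ and verify $8\le 4^{2^i}$, while the paper expands the powers of $3$ and $c_m$ separately and verifies $4^{2^i+1}+4\le 4^{2^{i+1}}$), but this is the same argument.
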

\proof For $i=0$, the right hand side evaluates to $(3c_m)^{(4c_1)^2}\ge c_m$, which is equal to the left hand side. Now we assume that the statement is true for $i-1$ and prove for $i$ where $i\ge 1$. $m_i\le (3m_{i-1})^{4c_1^{2^i}}\le (3(3c_m)^{(4c_1)^{2^{i}}})^{4c_1^{2^i}}=3^{(4c_1)^{2^{i}}4c_1^{2^i}+4c_1^{2^i}}c_m^{(4c_1)^{2^{i}}4c_1^{2^i}}\le  (3c_m)^{(4c_1)^{2^{i+1}}}$ as $4^{2^i+1}+4\le 4^{2^{i+1}}\forall i\ge 1$.\\

We have $v_i\le m_i$. Thus, the size of graph $A_i$ is at most $(3c_m)^{(4c_1)^{2^{i+1}}}$. 
\end{proof}

\section{Proof of Theorem~\ref{sparsitysquares}}
\label{sec:sparsityproof}

\begin{proof}
Think of $G_1$ and $G_2$ as undirected $G_1'$ and $G_2'$; their sparsity remains the same. Let $H$ be the set of edges on the cut that achieves the sparsest cut on $G$ separating $n$ source-sink pairs. Consider partitioning this set into sets $H_i = \{ e_{1i}, e_{2i}, ..., e_{h_ii} \}$ according to which copy of $G_2$ (or equivalently $G_2'$), the edge belongs to in $G$. $H_i$ denotes the edges belonging to the $i$-th copy of $G_2$, $|H_i|=h_i$. Note that $|H|=\sum_ih_i$. Let $n_{i}^{(2)}$ be the number of source and sink pairs that $H_i$ separates in the $i$-th copy of $G_2$. These cuts have capacity $\sum_{e \in H_i} c_{2e}$ in $G_2$. By construction, each of these source-sink pairs would have replaced an edge in some copy of $G_1$ (or equivalently undirected $G_1'$). Assume the $k$-th  ($k\in[n_{i}^{(2)}]$) 
source-sink pair replaced edge $e_{i}$ in the $j_{ik}$-th copy of $G_1$ (All source-sink pairs replace the same edge). Mark this edge in the $j_{ik}$-th copy of $G_1$ (which has now been replaced in $G$). The $i$-th copy of $G_2$ makes $n_{i}^{(2)}$ marks. Let $F_j$ be the set of all such marked edges in the $j$-th copy of $G_1$. Let $F_j$ cut $n_{j}^{(1)}$ source-sink pairs in $G_1$. Any source-sink pair that gets cut in $G$ by $H$ must be cut in $G_1$ under $F_j$ by construction. Therefore, $\sum_j n_{j}^{(1)}\ge n$. It is not an equality because there could be a source-sink pair that gets cut by $F_j$ but not by $H$ in $G$, due to paths that travel from the source to other copies of $G_1$ through connecting copies of $G_2$ and come back at the sink. The theorem follows from the following inequalities: 

\begin{equation}
\begin{aligned}
\sum_{e\in H}c_e & = \sum_ic_{1e_{i}}\sum_{e\in H_i}c_{2e}  = \sum_in_{i}^{(2)}c_{1e_{i}}\frac{\sum_{e\in H_i}c_{2e}}{n_{i}^{(2)}}  \\
& \ge \sum_in_{i}^{(2)}c_{1e_{i}}\text{Sparsity} (G_2)  = \text{Sparsity} (G_2)\left(\sum_in_{i}^{(2)}c_{1e_{i}}\right)  \\
& =\text{Sparsity} (G_2)\left(\sum_j\sum_{e\in F_j}c_{1e}\right)  = \text{Sparsity} (G_2)\left(\sum_jn_{j}^{(1)}\frac{\sum_{e\in F_j}c_{1e}}{n_{j}^{(1)}}\right) \\
& \ge  \text{Sparsity} (G_2)\left(\sum_jn_{j}^{(1)}\right)\text{Sparsity}(G_1)  \ge n \left(\text{Sparsity}(G_1) \cdot \text{Sparsity} (G_2)\right) 
\end{aligned}
\end{equation}

The first equality follows from the definition of edge capacities in $G$ in terms of edge capacities in $G_1$ and $G_2$. Since $H_i$ cuts $n_{i}^{(2)}$ source-sink pairs in a copy of $G_2$, the first inequality follows from the Sparsity$(G_2)$ being the smallest ratio for all the cuts. The first equality on the third line follows from the fact that an edge belongs to $F_j$ only when the corresponding source-sink pair that replaced this edge in $G_1$ is cut by the cut corresponding to that copy of $G_2$ and $i$-th copy of $G_2$ result in exactly $n^{(2)}_i$ such edges distributed amongst $F_j$s. Therefore, $\frac{\sum_{e\in H}c_e}{n}\ge \text{Sparsity}(G_1) \cdot \text{Sparsity} (G_2)\implies \text{Sparsity}(G) \geq \text{Sparsity}(G_1) \cdot \text{Sparsity} (G_2)$. 
\end{proof}

\end{document}